\documentclass[12pt, reco]{article}
\usepackage{amssymb,amscd,amsmath,amsthm}

\usepackage[colorinlistoftodos]{todonotes}
\usepackage[colorlinks=true, allcolors=blue]{hyperref}
\usepackage{lmodern}
\usepackage[english]{babel}
\usepackage{color}
\usepackage{bm}
\usepackage{graphicx}
\usepackage{tikz}
\usepackage[margin=1in]{geometry}
\newtheorem{theorem}{Theorem}[section]
\newtheorem{proposition}[theorem]{Proposition}

\newtheorem{lemma}[theorem]{Lemma}
\newtheorem{definition}[theorem]{Definition}
\newtheorem{example}[theorem]{Example}
\newtheorem{remark}[theorem]{Remark}

 \usepackage{algpseudocode}
 \usepackage{algorithmicx}
 \usepackage{algorithm}
\title{Quantum Viterbi Algorithm}

\begin{document}

\maketitle

\centerline{\author{\Large Luigi Accardi }}\centerline{ Centro Vito Volterra, Universit\`a  di Roma ''Tor Vergata''}\centerline{ Roma I-00133, Italy }\centerline{accardi@volterra.uniroma2.it}\vskip0.3cm

\centerline{ \author{\Large Abdessatar Souissi}}\centerline{Department of Management Information Systems, College of Business and Economics, }\centerline{Qassim University, Buraydah 51452, Saudi Arabia}\centerline{\textit{a.souaissi@qu.edu.sa}}\vskip0.3cm

\centerline{\author{\Large{ El Gheteb Soueidi}}}\centerline{Department of Mathematics and Informatics, Faculty of Sciences and Technologies,} \centerline{University of Nouakchott, Mauritania} \centerline{elkotobmedsalem@fst.e-una.mr}\vskip0.3cm

\centerline{\author{\Large{ Farrukh Mukhamedov}}}\centerline{Department of Mathematical Sciences, College of Science, United Arab Emirates University,} \centerline{ P.O.
Box 15551, Al Ain, Abu Dhabi, UAE} \centerline{farrukh.m@uaeu.ac.ae}\vskip0.3cm

\centerline{\author{\Large Mohamed Rhaima}}\centerline{Department of Statistics and Operations Research, College of Sciences,}\centerline{ King Saud University, P. O Box 2455, Riyadh 11451, Saudi Arabia}\centerline{mrhaima.c@ksu.edu.sa}

\tableofcontents

 \begin{abstract}
We introduce a  quantum Viterbi decoding algorithm  for hidden quantum Markov models (HQMMs) motivated by quantum information processing and quantum algorithms. Given a finite sequence of measurement outcomes, the algorithm identifies hidden quantum trajectories that maximize a joint decoding functional, serving as a genuine quantum analogue of the classical Viterbi score. Unlike classical hidden Markov models, where decoding optimizes over a finite discrete state space, our method performs optimization over a continuous manifold of pure quantum effects, thereby exploiting coherent superpositions in the hidden memory. We prove a strict quantum advantage: coherent hidden trajectories can achieve decoding scores that strictly exceed any classical strategy constrained to diagonal (commuting) effects, even when both models share the same observed statistics. These results position quantum Viterbi decoding as a concrete quantum algorithmic primitive for sequential decision-making, with direct applications to quantum memories, quantum communication with memory, and near-term quantum machine learning on NISQ devices.

\textbf{Keywords:} Quantum hidden Markov models, quantum Viterbi algorithm, Statistical Model, quantum dynamic programming, quantum advantage, sequential decoding.
\end{abstract}

\section{Introduction}
For over five decades, the Hidden Markov Model (HMM) has stood as a quiet workhorse of artificial intelligence—a testament to the power of modeling the world through what we cannot see. From the earliest speech recognition systems that learned to parse the cacophony of human speech into discrete phonetic units, to modern computational biology where these models trace the intricate architecture of genes within vast strands of DNA, the HMM has provided a unifying framework for understanding sequential data \cite{Rab2002, Eddy1996, Mor2021}. Its enduring appeal is not merely its ability to capture the subtle dance between observed phenomena and their hidden causes, but the sheer elegance of its computational solution. At its heart lies a daunting challenge: the number of possible hidden state sequences grows as \(d^{n+1}\) for \(d\) states and length \(n\), a combinatorial explosion that would rapidly become intractable. Yet, the Viterbi algorithm, a masterpiece of dynamic programming, tames this complexity, reducing what seems like an exponentially scaling problem to a disciplined \(O(d^2 n)\) computation \cite{V03, FV05, lou95}. This efficiency—the ability to extract the single most probable explanation from a sea of possibilities—is what transformed HMMs from a theoretical curiosity into a practical cornerstone of modern machine learning and signal processing \cite{Alali2024, WhiteMahonyBrushe00, Mitchell1997}. 

However, as we stand on the cusp of the quantum computing era, this very notion of scaling is being re‑evaluated. The emergence of quantum convolutional coding and its maximum‑likelihood decoding strategies, inspired by Viterbi‑type recursions in the quantum regime \cite{OllivierTillichPRL2003, OllivierTillich2006, TanLi2010}, together with explicitly quantum algorithms for Viterbi decoding of classical convolutional codes \cite{GriceMeyer2015}, signals a profound shift: dynamic programming on trellises is no longer confined to classical computation. These developments, synthesized and systematized in the modern theory of quantum convolutional codes \cite{WildeQECBook2013}, challenge us to ask whether the algorithmic guarantees that have underpinned the classical Viterbi algorithm for generations can be not only matched but fundamentally re‑engineered in a quantum setting. Can a quantum formulation of Viterbi decoding reshape the scaling landscape for inference over hidden processes, moving beyond the limits of classical dynamic programming to approach, or even attain, genuine quantum advantage in sequential data analysis?
Quantum Markov chains (QMCs) provide a principled non‑commutative generalization of classical Markov chains, replacing stochastic kernels by completely positive, identity‑preserving transition expectations on operator algebras \cite{accardi1974,ASS20,Ib08}. Their Hilbert–Schmidt duals are quantum channels, yielding the usual Schrödinger‑picture formulation familiar in quantum information theory \cite{FR15,D18,Ib08}. In finite dimensions, a QMC is determined by a transition expectation \(\mathcal{E}:\mathcal{B}(\mathcal{H})\otimes\mathcal{B}(\mathcal{H})\to\mathcal{B}(\mathcal{H})\) that is completely positive, unital, and defines a consistent family of local states on the quasi‑local algebra \(\mathcal{B}(\mathcal{H})^{\otimes\mathbb{N}}\), thereby encoding a genuinely quantum Markov property \cite{ASS20}. When \(\mathcal{E}\) preserves a diagonal subalgebra, the construction collapses to a classical Markov chain on a finite alphabet, but in general the presence of off‑diagonal coherence leads to dynamics that have no classical counterpart \cite{Lu95,ASS20,Ib08}.

Hidden quantum Markov models (HQMMs) enrich this picture by coupling a hidden quantum chain to an observable system via quantum instruments, yielding a quantum analogue of classical HMMs \cite{AGLS24Q,Monras11}. In the HQMP formulation, one specifies hidden and observable tensor‑product algebras, a family of hidden transition expectations \(\mathcal{E}_{H;n}\), and emission maps \(\mathcal{E}_{H,O;n}\) that implement the measurement–back‑action mechanism on the hidden algebra, with a joint state \(\varphi_{H,O}\) enforcing consistency of the process \cite{AGLS24Q}. Classical HMMs are recovered when these maps preserve diagonal subalgebras, while the fully quantum case allows coherent hidden evolution and nontrivial back‑action, producing temporal correlations beyond any classical HMM of the same hidden dimension \cite{AGLS24Q,Monras11}. 

Moreover, HQMMs connect naturally to entangled hidden Markov models and matrix product state descriptions of quantum processes, where output statistics are encoded by MPS generated from suitable Kraus operators \cite{SS23,Sou25,FLW24}. This embeds HQMMs into the tensor‑network toolkit of quantum many‑body physics and clarifies their capacity to model finite‑memory quantum stochastic processes \cite{Monras11,Sou25,FLW24}. Taken together, these constructions provide a robust mathematical framework within which one can formulate and analyze rigorous quantum analogues of classical decoding algorithms—most notably Viterbi‑type and Bellman‑recursive schemes for quantum and hidden quantum Markov models—thereby offering a natural setting to pursue the quantum decoding questions raised above \cite{ASS20,AGLS24Q,Monras11,Sou25}.

\subsection{Contribution}

We develop a Viterbi-type decoding scheme for hidden quantum Markov models (HQMMs) in an operator-algebraic setting, placing quantum decoding within the non-commutative Markov and HQMP framework introduced above \cite{ASS20,AGLS24Q,Monras11}. Given a finite observation sequence \((q_0,\dots,q_n)\) with \(q_m\in\mathcal{P}_1(\mathcal{K})\) (rank-one projections on \(\mathcal{K}\)), the quantum Viterbi problem is to select a hidden trajectory \((p_0,\dots,p_n)\) of pure effects \(p_m\in\mathcal{P}_1(\mathcal{H})\subset\mathsf{Eff}(\mathcal{H})\) that maximizes the joint score functional
\[
\psi_n(p_0,\dots,p_n)
:=
\varphi_{H,O}\!\left(\bigotimes_{m=0}^n (p_m\otimes q_m)\right),
\]
where \(\mathsf{Eff}(\mathcal{H})=\{e\in\mathcal{B}(\mathcal{H})\mid 0\le e\le \mathbb{I}\}\) is ordered by \(A\le B \Leftrightarrow B-A\ge 0\), and \(\mathcal{P}_1(\mathcal{H})\) plays the role of an “atomic’’ hidden state space, generalizing the finite state set \(S\) of a classical HMM \cite{ASS20,AGLS24Q}. The basic optimization task is to compute
\[
\psi_{\mathrm{quantum}}^*
:=
\sup_{(p_0,\dots,p_n)\in\mathcal{P}_1(\mathcal{H})^{n+1}}
\psi_n(p_0,\dots,p_n)
\]
and to identify an optimal or approximately optimal trajectory \((p_0^*,\dots,p_n^*)\).

In the classical case, with hidden states \(i_m\in S\), this supremum reduces to a finite maximum
\[
\psi_{\mathrm{classical}}^*
=
\max_{(i_0,\dots,i_n)\in S^{n+1}}
\psi_n\bigl(|i_0\rangle\!\langle i_0|,\dots,|i_n\rangle\!\langle i_n|\bigr),
\]
and the Markov property yields a scalar factorization into transition and emission probabilities, which underlies the standard dynamic-programming (Viterbi) recursion with \(O(N^2 n)\) complexity for \(|S|=N\) \cite{Rab2002,V03,FV05}. Our construction lifts this recursion to the operator level: transition and emission probabilities are replaced by completely positive, identity-preserving maps and quantum instruments \cite{accardi1974,Ib08,AGLS24Q}, and the recursion propagates operator-valued “scores’’ on \(\mathcal{P}_1(\mathcal{H})\) rather than scalars.

We then specialize to a physically motivated quantum memory model with hidden qubit \(\mathcal{H}\cong\mathbb{C}^2\) and two-dimensional observation space \(\mathcal{K}\cong\mathbb{C}^2\), whose dynamics interpolate between coherent unitary rotations and phase-damping channels, while emissions are implemented by weak measurements that partially distinguish computational-basis states but preserve coherence \cite{FR15,D18}. In this setting we prove a strict decoding-level quantum advantage: there exist finite horizons \(n\) and observation sequences \((q_0,\dots,q_n)\) such that
\[
\psi_{\mathrm{quantum}}^*
=
\sup_{(p_0,\dots,p_n)\in\mathcal{P}_1(\mathcal{H})^{n+1}} \psi_n(p_0,\dots,p_n)
\;>\;
\psi_{\mathrm{classical}}^*
=
\sup_{(p_0,\dots,p_n)\in D_H^{n+1}} \psi_n(p_0,\dots,p_n),
\]
where \(D_H=\{|0\rangle\langle0|,|1\rangle\langle1|\}\subset\mathcal{P}_1(\mathcal{H})\) is the diagonal “classical’’ subset. Thus, even for a two-dimensional hidden space, the HQMM decoder attains scores unattainable by any classical HMM with the same cardinality, provided the recursion is allowed to explore coherent (off-diagonal) effects, echoing known representational gains of quantum models for finite-memory processes \cite{Monras11,FLW24,Sou25}. Geometrically, the advantage stems from the non-commutative structure of \(\mathcal{P}_1(\mathcal{H})\): our recursion optimizes over the full Bloch sphere, whereas a classical decoder is confined to its poles.

Beyond this canonical qubit memory example, the same framework applies to more general quantum signal-processing and sequence-modeling tasks, where decoding must be performed in the presence of temporal quantum correlations—such as quantum filtering, quantum trajectory estimation, or quantum-enhanced sequence learning \cite{ticozzi2008,HY2018,SSP15,J22}. In this sense, our operator-algebraic quantum Viterbi scheme is not only a theoretical generalization of the classical algorithm, but also a practical template for designing decoding routines in HQMM-based models of quantum memories, quantum communication channels, and quantum machine-learning architectures.

\subsection{Quantum Viterbi landscape and outlook}

A number of works have developed Viterbi-type decoding schemes within quantum information and coding theory. In the setting of quantum convolutional codes, maximum-likelihood decoders based on dynamic programming yield “quantum Viterbi’’ algorithms whose complexity scales linearly with the number of encoded qubits, thereby providing a direct quantum analogue of classical trellis-based decoding in the stabilizer formalism \cite{OllivierTillichPRL2003,OllivierTillich2006,WildeQECBook2013}. Subsequent developments have focused on improving the efficiency of such decoders: hybrid constructions embed carefully engineered classical Viterbi subroutines into quantum-coding architectures, significantly reducing the computational overhead of syndrome-based maximum-likelihood decoding \cite{TanLi2010}, while explicit Schr\"odinger-picture quantum Viterbi algorithms for classical convolutional codes demonstrate how superposition and amplitude amplification can be leveraged to obtain asymptotic speedups under suitable structural assumptions \cite{GriceMeyer2015}. Beyond coding theory, related ideas appear in the study of quantum stochastic processes and hidden quantum dynamics: HQMMs and quantum filtering frameworks provide a natural setting for sequence estimation in the presence of quantum memory effects \cite{ticozzi2008,HY2018,SSP15,J22}, and recent advances in characterizing quantum memory via channel discrimination and process-tensor methods suggest structural connections between decoding functionals and operational witnesses of non-Markovianity \cite{TaElM,OZNPQ26}.

Against this backdrop, the operator-algebraic, Heisenberg-picture formulation of the quantum Viterbi recursion introduced in this work opens several directions for further research. A primary objective is the development of concrete implementations, including circuit-level realizations and hybrid architectures in which HQMM-based quantum decoders interface with quantum convolutional and stabilizer-code Viterbi schemes \cite{OllivierTillich2006,TanLi2010,GriceMeyer2015}. On the algorithmic side, scalability remains a central challenge, motivating efficient approximation strategies such as adaptive refinements of $\delta$-nets on $\mathcal{P}_1(\mathcal{H})$, variational parametrizations of effects tailored to specific HQMM structures, and tensor-network representations of the recursion to control complexity in large hidden dimensions $N$ \cite{GT25,Sou25}. From an applications perspective, the strict quantum advantage we establish suggests that HQMM-based decoders may yield performance gains for tasks governed by temporal quantum correlations, including quantum filtering, quantum trajectory estimation, and quantum-enhanced sequence modeling in machine learning \cite{ticozzi2008,HY2018,SSP15,J22}. In addition, the connection between Viterbi-type functionals and quantum memory witnesses indicates that such functionals could serve as task-oriented diagnostics for hidden quantum memory, complementing existing process-tensor and channel-based criteria \cite{TaElM,OZNPQ26}. Finally, the structural parallel between our recursion and classical dynamic programming points toward hybrid classical–quantum optimization schemes, where quantum interference and amplitude amplification are exploited to approximate the argmax over $\mathcal{P}_1(\mathcal{H})^{n+1}$, in line with emerging proposals for quantum optimization and interferometric decoders \cite{Ambainis2003,JSW2025,SCFFK25}. Within the broader HQMM framework—particularly with the newly introduced class of causal HQMMs—these developments are expected to extend the applicability of quantum Viterbi methods across a wide range of models and domains.

In the remainder of the paper, we develop this program in several steps. In Section~\ref{sect-preli}, we review the operator-algebraic formulation of quantum Markov chains and hidden quantum Markov models, emphasizing the transition expectations, quantum instruments, and structural conditions that connect the HQMM framework to classical HMMs. Section~\ref{Quantum_Vit_Alg} introduces the quantum Viterbi decoding scheme, formulating path optimization directly on manifolds of quantum effects and deriving an operator-valued dynamic recursion that generalizes the classical Bellman–Viterbi equations to the non-commutative setting. In Section~\ref{Sect_Class}, we analyze the classical limit of our construction and establish strict quantum optimality results, showing that HQMM-based decoders can outperform all classical Viterbi decoders with the same hidden cardinality. Section~\ref{sect_App} applies the theory to a concrete quantum memory model, illustrating how quantum Viterbi decoding reveals advantages in decoding performance and memory representation for qubit-based processes. Finally, Section~\ref{Sect_disc} discusses broader implications for quantum error correction, quantum sequence modeling, and future directions in quantum dynamic programming and decoding.

\section{Preliminaries on Hidden Quantum Markov  Models}\label{sect-preli}

Let \(\mathcal{H}\) be a finite-dimensional Hilbert space of dimension \(N\), representing the  hidden state space . It is equipped with an orthonormal basis \(\{|j\rangle \mid j \in I_H\}\) indexed by \(I_H = \{1, \dots, N\}\). The algebra of all bounded linear operators on \(\mathcal{H}\), denoted \(\mathcal{B}_H = \mathcal{B}(\mathcal{H})\), captures all possible transformations of the hidden quantum system.

Dually, let \(\mathcal{K}\) be an \(M\)-dimensional Hilbert space representing the  observation space , with orthonormal basis \(\{e_k \mid k \in I_O\}\) and \(I_O = \{1, \dots, M\}\). The corresponding observation algebra is \(\mathcal{B}_O = \mathcal{B}(\mathcal{K})\). At each discrete time step \(k \in \mathbb{N}\), we consider copies \(\mathcal{H}_k \cong \mathcal{H}\) and \(\mathcal{K}_k \cong \mathcal{K}\), and define the local  hidden-observable algebra  for that instant as
\[
\mathcal{B}_{H,O;k} := \mathcal{B}_{H;k} \otimes \mathcal{B}_{O;k},
\]
where \(\mathcal{B}_{H;k} \cong \mathcal{B}_H\) and \(\mathcal{B}_{O;k} \cong \mathcal{B}_O\). The tensor product structure formally encodes the coupling between the latent quantum system and the measurement apparatus at time \(k\).

For a finite time horizon \([0,n] := \{0,1,\dots,n\}\), we construct the temporal algebras that describe the entire history:
\[
\mathcal{B}_{H;[0,n]} = \bigotimes_{k=0}^{n} \mathcal{B}_{H;k}, \quad
\mathcal{B}_{O;[0,n]} = \bigotimes_{k=0}^{n} \mathcal{B}_{O;k}, \quad
\mathcal{B}_{H,O;[0,n]} = \bigotimes_{k=0}^{n} \mathcal{B}_{H,O;k}.
\]
The corresponding  local algebras , defined as the inductive limits
\[
\mathcal{B}_{H;\text{loc}} := \bigcup_{n \ge 0} \mathcal{B}_{H;[0,n]}, \quad
\mathcal{B}_{O;\text{loc}} := \bigcup_{n \ge 0} \mathcal{B}_{O;[0,n]}, \quad
\mathcal{B}_{H,O;\text{loc}} := \bigcup_{n \ge 0} \mathcal{B}_{H,O;[0,n]},
\]
consist of all operators with finite temporal support. The full  sample algebra  is the C\(^*\)-completion
\[
\mathcal{B}_{H,O;\mathbb{N}} = \overline{\bigotimes_{k \in \mathbb{N}} (\mathcal{B}_{H;k} \otimes \mathcal{B}_{O;k})}^{\|\cdot\|},
\]
which provides the mathematical universe for defining quantum stochastic processes over infinite time, generalizing the classical concept to the non-commutative setting \cite{Neumann18, KemenySnell83}.

\begin{definition}[Hidden Quantum Markov Process (HQMP)]\label{def-QHMP}
A state \(\varphi_{H,O} : \mathcal{B}_{H,O;\mathbb{N}} \to \mathbb{C}\) is called a  hidden quantum Markov process  if it is completely determined by:
\begin{itemize}
    \item An  initial state  \(\phi_{H,0} : \mathcal{B}_H \to \mathbb{C}\),
    \item  Hidden transition maps  \(\{\mathcal{E}_{H;n} : \mathcal{B}_{H;n} \otimes \mathcal{B}_{H;n+1} \to \mathcal{B}_{H;n}\}_{n \ge 0}\) (completely positive, unital),
    \item  Emission maps  \(\{\mathcal{E}_{H,O;n} : \mathcal{B}_{H;n} \otimes \mathcal{B}_{O;n} \to \mathcal{B}_{H;n}\}_{n \ge 0}\) (completely positive).
\end{itemize}
For any finite sequence \(a_m \in \mathcal{B}_{H;m}, b_m \in \mathcal{B}_{O;m}\) (\(m = 0,\dots,n\)):
\begin{equation}\label{eq:joint-expectation-recursive}
\begin{aligned}
\varphi_{H,O}\left(\bigotimes_{m=0}^{n} (a_m \otimes b_m)\right) = \phi_{H,0} \Bigg( \mathcal{E}_{H;0}\bigg( \mathcal{E}_{H,O;0}(a_0 \otimes b_0) \otimes \mathcal{E}_{H;1}\Big( \mathcal{E}_{H,O;1}(a_1 \otimes b_1) \otimes \\
\cdots \otimes \mathcal{E}_{H;n}\big( \mathcal{E}_{H,O;n}(a_n \otimes b_n) \otimes \mathbb{I}_{H;n+1} \big) \cdots \Big) \bigg) \Bigg).
\end{aligned}
\end{equation}
Equivalently, defining  block maps  \(\mathcal{F}_{a,b}^{(n)}(\cdot) := \mathcal{E}_{H;n}\big(\mathcal{E}_{H,O;n}(a \otimes b) \otimes \cdot\big)\):
\[
\varphi_{H,O}\left(\bigotimes_{m=0}^{n} (a_m \otimes b_m)\right)
= \phi_{H,0} \circ \mathcal{F}_{a_0,b_0}^{(0)} \circ \mathcal{F}_{a_1,b_1}^{(1)} \circ \cdots \circ \mathcal{F}_{a_n,b_n}^{(n)}(\mathbb{I}_{H;n+1}).
\]
The triple \(\varphi_{H,O} \equiv (\phi_{H,0}, (\mathcal{E}_{H;n}), (\mathcal{E}_{H,O;n}))\) constitutes a  hidden quantum Markov chain .
\end{definition}

The algebraic framework underlying Hidden Quantum Markov Processes exhibits a structure naturally amenable to quantum circuit interpretation, wherein the information exchange between hidden and observable subsystems is systematized through complementary encoder-decoder operations.   Given a hidden system state $a_n \in \mathcal{B}_{H;n}$ and an observation outcome encoded by the operator $b_n \in \mathcal{B}_{O;n}$, the combined dynamical evolution through one temporal step follows the pattern:

\begin{center}
\begin{tikzpicture}[thick, scale=0.9, >=stealth]
    % Registers / Wires
    \draw (0,2) -- (2,2); % a_n to Encoder
    \draw (0,0) -- (2,0); % b_n to Encoder
    \draw (3.5,1) -- (4.5,1); % Encoder Output to Decoder
    \draw (0,-2) -- (4.5,-2); % rhq_{n+1} to Decoder

    % Labels for inputs
    \node[left] at (0,2) {\(a_n\)};
    \node[left] at (0,0) {\(b_n\)};
    \node[left] at (0,-2) {\(a_{n+1}\)};

    % Encoder (Emission) - Process a_n and b_n
    \draw[fill=blue!15, rounded corners=3pt] (2,-0.5) rectangle (3.5,2.5);
    \node at (2.75,1) {\(\mathcal{E}_{H,O;n}\)};

    % Decoder (Transition) - Centered label and single output
    \draw[fill=red!15, rounded corners=3pt] (4.5,-2.5) rectangle (6,2.5);
    \node at (5.25,0) {\(\mathcal{E}_{H;n}\)};

    % Final Output - Single line from the middle of the red rectangle
    \draw[->] (6,0) -- (9,0) node[right] {\(\mathcal{F}_{a_n,b_n}^{(n)}(a_{n+1})\)};

    % Arrows showing flow
    \draw[->, dashed, blue] (1,2.5) -- (1,-0.5) node[midway,left] {Encoding};
    \draw[->, dashed, red] (4,1) -- (4,-3 ) node[midway,left] {Decoding};

\end{tikzpicture}
\end{center}

The emission operation $\mathcal{E}_{H,O;n} : \mathcal{B}_{H;n} \otimes \mathcal{B}_{O;n} \to \mathcal{B}_{H;n}$ functions as an encoding channel: it synthesizes information from the hidden state $a_n \in \mathcal{B}_{H;n}$ and the observed outcome $b_n \in \mathcal{B}_{O;n}$, producing an intermediate effective state residing in $\mathcal{B}_{H;n}$ that encapsulates the measurement-induced disturbance. Employing the Kraus decomposition $\{\mathcal{K}_i^{(n)}\}_{i \in \mathbb{N}}$ widely used in quantum information theory~\cite{Nielsen2000}, one may express this as:
\begin{equation}
\mathcal{E}_{H,O;n}(a_n \otimes b_n) = \sum_{i \in \mathbb{N}} K_i^{(n)} (a_n \otimes b_n) (K_i^{(n)})^\dagger,
\end{equation}
wherein the Kraus operators $\{K_i^{(n)}\}_{i \in \mathbb{N}}$ satisfy the completeness relation $\sum_{i \in \mathbb{N}} (K_i^{(n)})^\dagger K_i^{(n)} = \mathbb{I}_{H;n}$. Such completely positive, trace-preserving (CPTP) maps constitute the natural generalization of classical stochastic transition kernels to the quantum regime.

Conversely, the transition map $\mathcal{E}_{H;n} : \mathcal{B}_{H;n} \otimes \mathcal{B}_{H;n+1} \to \mathcal{B}_{H;n}$, which is completely positive and unital, serves a decoding role: it processes the encoded information together with the subsequent hidden state $a_{n+1} \in \mathcal{B}_{H;n+1}$, yielding a result in $\mathcal{B}_{H;n}$ that characterizes the evolution of the hidden subsystem over the temporal interval $[n, n+1]$. In the language of open quantum systems and dynamical semigroups~\cite{ticozzi2008}, this operation may be expressed via the Stinespring dilation as a partial trace over an ancillary environmental space $\mathcal{H}_E$:
\begin{equation}
\mathcal{E}_{H;n}(a_n \otimes a_{n+1}) = \text{Tr}_{E}\left[U^{(n)}(a_n \otimes a_{n+1} \otimes a_E^{(n)})(U^{(n)})^\dagger\right],
\end{equation}
for an environmental state $a_E^{(n)} \in \mathcal{B}(\mathcal{H}_E)$ and a unitary operator $U^{(n)} : \mathcal{H}_{H;n} \otimes \mathcal{H}_{H;n+1} \otimes \mathcal{H}_E \to \mathcal{H}_{H;n} \otimes \mathcal{H}_{H;n+1} \otimes \mathcal{H}_E$ that governs the joint evolution.

  For fixed observation $b_n \in \mathcal{B}_{O;n}$ and hidden operator $a_n \in \mathcal{B}_{H;n}$, applied to an input state $a_{n+1} \in \mathcal{B}_{H;n+1}$, it is defined as:
\begin{equation}
\mathcal{F}_{a_n,b_n}^{(n)}(a_{n+1}) := \mathcal{E}_{H;n}\left(\mathcal{E}_{H,O;n}(a_n \otimes b_n) \otimes a_{n+1}\right).
\end{equation}
This composition inherits the CPTP property from its constituent maps, thereby preserving quantum positivity and trace structure at each step.

The mathematical equivalence between the Schrödinger picture---which describes the evolution of density operators \(\rho, \sigma \in \mathcal{D}(\mathcal{H})\)---and the Heisenberg picture---which governs observables \(O, A \in \mathcal{B}(\mathcal{H})\)---is established via the  Hilbert--Schmidt duality. This duality ensures that for any quantum operation \(\mathcal{E}\), the adjoint map \(\mathcal{E}^{\dagger}\) satisfies the trace relation
\[
\operatorname{Tr}\!\big[\mathcal{E}(\rho) A\big] = \operatorname{Tr}\!\big[\rho \; \mathcal{E}^{\dagger}(A)\big],
\]
for all states \(\rho\) and observables \(A\). As a result, HQMMs apply equally to \textbf{state-based estimation problems}, such as quantum filtering. This algebraic framework, built on operator algebras \(\mathcal{B}_{H,O;\mathbb{N}}\), positive operator-valued measures, and completely positive maps, provides a systematic foundation for constructing quantum generalizations of classical inference algorithms---most notably the quantum Viterbi algorithm, which is elaborated in later sections of this work~\cite{FV05, V03, ASS20}.

\section{Quantum Viterbi Decoding: Path Optimization on Quantum State Manifolds}\label{Quantum_Vit_Alg}
We now establish the rigorous mathematical framework for quantum path optimization within the context of  HQMMs \cite{Monras11, AGLS24Q}. Let $\mathcal{H}$ and $\mathcal{K}$ be finite-dimensional complex Hilbert spaces representing the hidden and observation systems, respectively, with $\dim\mathcal{H} = N$. The space of density operators on $\mathcal{H}$ is the convex set
\[
\mathcal{D}(\mathcal{H}) := \{\rho \in \mathcal{B}(\mathcal{H}) \mid \rho \geq 0,\ \operatorname{tr}(\rho) = 1\}.
\]
Its extreme boundary consists precisely of the rank-one projections, which correspond to pure quantum states \cite{Neumann18, Nielsen2000}.

The space of \emph{pure quantum states} on the hidden Hilbert space $\mathcal{H}$ is characterized by
\[
\mathcal{P}_1(\mathcal{H}) = \left\{ p \in \mathcal{B}(\mathcal{H}) \mid p = p^\dagger = p^2,\ \operatorname{tr}(p) = 1 \right\} = \left\{ |\xi\rangle\langle\xi| : \xi \in \mathcal{H},\ \|\xi\| = 1 \right\}.
\]
This space is isomorphic to the complex projective space $\mathbb{C}P^{N-1}$ and inherits a canonical   structure via the Fubini--Study metric \cite{Nielsen2000}. Analogously, the observation space admits the pure state space
\[
\mathcal{P}_1(\mathcal{K}) = \left\{ q \in \mathcal{B}(\mathcal{K}) \mid q = q^\dagger = q^2,\ \operatorname{tr}(q) = 1 \right\} = \left\{ \chi\chi^* : \chi \in \mathcal{K},\ \|\chi\| = 1 \right\}.
\]

In the HQMM framework, $\mathcal{P}_1(\mathcal{H})$ parametrizes possible hidden quantum trajectories, while $\mathcal{P}_1(\mathcal{K})$ represents potential measurement outcomes. This generalizes the state-space formalism of classical hidden Markov models \cite{Rab2002, WhiteMahonyBrushe00} to the quantum regime, where the hidden dynamics are governed by a quantum stochastic process \cite{accardi1974, ASS20}. The symplectic geometry of $\mathcal{P}_1(\mathcal{H})$ plays a fundamental role in the analysis of quantum dynamics and control \cite{ticozzi2008, GT25}.

In the standard formulation of quantum measurement theory, an effect  on the Hilbert space \(\mathcal{H}\) is a bounded self-adjoint operator \(E \in \mathcal{B}(\mathcal{H})\) satisfying \(0 \leq E \leq \mathbb{I}\), representing a yes–no measurement outcome with outcome probabilities given by \(\operatorname{tr}(\rho E)\) for a state \(\rho\) \cite{MilP07}. The set of all effects on \(\mathcal{H}\) is denoted by
\begin{equation}\label{eq:effects}
\mathsf{Eff}(\mathcal{H}) := \{\, T \in \mathcal{B}(\mathcal{H}) \mid 0 \leq T \leq \mathbb{I} \,\}.
\end{equation}

and forms, with the inherited L\"owner order, a prototypical example of an effect algebra that captures the unsharp, probabilistic structure of quantum events \cite{MilP07}.

A core algorithmic challenge in extending classical decoding methods to the quantum setting is the need to optimize over continuous families of projection-valued paths. The classical Viterbi algorithm \cite{V03, FV05, Shi2021} relies on maximization over discrete paths; its quantum analogue requires suprema of families of quantum effects (operators between $0$ and $\mathbb{I}$). The following lemma guarantees the existence of such suprema, providing the necessary order-theoretic foundation.

\subsection{Existence of Optimal Hidden Quantum Paths}
\label{subsec:optimal-existence}

 QVA addresses the following  estimation problem in  HQMMs: given a finite temporal sequence of observable (pure) states \(\{q_m \in \mathcal{P}_1(\mathcal{K})\}_{m=0}^n\), identify the optimal sequence of hidden state projections \(\{p_m^* \in \mathcal{P}_1(\mathcal{H})\}_{m=0}^n\) that maximizes the joint quantum expectation functional. Formally, the problem is
\begin{equation}\label{eq:qviterbi-problem}
\max_{p_0,\dots,p_n \in \mathcal{P}_1(\mathcal{H})} \varphi_{H,O}\left(\bigotimes_{m=0}^{n} (p_m \otimes q_m)\right),
\end{equation}
where \(\varphi_{H,O}\) denotes the expectation functional of the underlying hidden quantum Markov process (QHMP) \cite{ASS20, AGLS24Q}. This constitutes the quantum generalization of classical Viterbi decoding \cite{V03, FV05}, where optimization now occurs over continuous, projection-valued paths inhabiting the complex projective manifold \(\mathcal{P}_1(\mathcal{H}) \cong \mathbb{C}P^{N-1}\), rather than a finite classical state set.

The existence of a solution to \eqref{eq:qviterbi-problem} is non-trivial, as it involves searching within the infinite, yet compact, geometric space of pure quantum states. This search inherits the ambient Riemannian geometry of the complex projective space, which is endowed with the Fubini–Study metric \cite{Nielsen2000}. The optimization problem thus takes place on a finite Cartesian product of compact   manifolds, a setting where global existence theorems from variational analysis on manifolds apply \cite{ticozzi2008}. The following theorem provides the rigorous guarantee for the existence of optimal quantum trajectories, a prerequisite for the well-posedness of the quantum Viterbi algorithm \cite{Li2024, MRDFS22}.

\begin{theorem}[Existence of Optimal Quantum Trajectories]\label{thm:optimal-sequence-existence}
For any finite observation sequence \(\{q_m\}_{m=0}^n\) with \(q_m \in \mathcal{P}_1(\mathcal{K})\), there exists an optimal sequence of hidden states \((p_0^*, p_1^*, \dots, p_n^*)\) with \(p_m^* \in \mathcal{P}_1(\mathcal{H})\) such that
\[
\varphi_{H,O}\left(\bigotimes_{m=0}^{n} (p_m^* \otimes q_m)\right) = \sup_{p_0,\dots,p_n \in \mathcal{P}_1(\mathcal{H})} \varphi_{H,O}\left(\bigotimes_{m=0}^{n} (p_m \otimes q_m)\right).
\]
Moreover, the supremum is attained as a maximum, and the search for this optimum is performed over the compact product manifold
\[
\mathcal{X}_n := \prod_{m=0}^{n} \mathcal{P}_1(\mathcal{H}) \cong (\mathbb{C}P^{N-1})^{n+1}.
\]
\end{theorem}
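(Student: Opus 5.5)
The plan is the Weierstrass extreme value theorem: show that the objective is a continuous real-valued function on a compact space. Define
\[
\Psi_n:\mathcal{X}_n\to\mathbb{C},\qquad \Psi_n(p_0,\dots,p_n):=\varphi_{H,O}\Big(\bigotimes_{m=0}^{n}(p_m\otimes q_m)\Big).
\]
The proof then has three steps: compactness of $\mathcal{X}_n$, continuity of $\Psi_n$, and the fact that $\Psi_n$ takes real (indeed nonnegative) values.

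\textbf{Compactness.} Since $\dim\mathcal{H}=N<\infty$, the space $\mathcal{B}(\mathcal{H})$ is finite-dimensional, so every norm on it is equivalent. The set $\mathcal{P}_1(\mathcal{H})$ is cut out by the continuous equations $p=p^\dagger$, $p=p^2$ and $\operatorname{tr}(p)=1$, so it is closed. It is also bounded, since $\|p\|=1$. By Heine--Borel it is compact. It can alternatively be described as the image of the unit sphere of $\mathcal{H}$ under the continuous map $\xi\mapsto|\xi\rangle\langle\xi|$, which identifies it with $\mathbb{C}P^{N-1}$. Tychonoff's theorem, in its elementary finite form, then makes the product $\mathcal{X}_n$ compact.

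\textbf{Continuity.} The map $(p_0,\dots,p_n)\mapsto\bigotimes_m(p_m\otimes q_m)$ is multilinear from a product of finite-dimensional spaces into $\mathcal{B}_{H,O;[0,n]}$, and is therefore continuous. It remains to check that $\varphi_{H,O}$ restricted to $\mathcal{B}_{H,O;[0,n]}$ is continuous. This follows because a state is a bounded linear functional of norm one. It can also be read off directly from the recursive formula \eqref{eq:joint-expectation-recursive}: the block maps $\mathcal{F}^{(m)}_{a,b}$ are built from the linear maps $\mathcal{E}_{H;m}$ and $\mathcal{E}_{H,O;m}$ between finite-dimensional spaces, so they are continuous, and they depend multilinearly on $(a,b)$. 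Hence $\Psi_n$ is continuous.

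\textbf{Real values.} The operator $\bigotimes_m(p_m\otimes q_m)$ is a tensor product of projections, so it is itself a positive operator (a projection). Since $\varphi_{H,O}$ is a state, it is positive, and therefore $\Psi_n\ge 0$ on $\mathcal{X}_n$. Consequently the supremum in \eqref{eq:qviterbi-problem} is a well-defined real number, bounded by $1$. By Weierstrass it is attained at some $(p_0^*,\dots,p_n^*)\in\mathcal{X}_n$.

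\textbf{Main obstacle.} The one point that needs care is this positivity step. Definition~\ref{def-QHMP} only describes $\varphi_{H,O}$ through the recursion \eqref{eq:joint-expectation-recursive}. If one argues from the recursion alone rather than from the hypothesis that $\varphi_{H,O}$ is a state, one needs the composed block maps to be positive. That holds because $\mathcal{E}_{H,O;m}$ and $\mathcal{E}_{H;m}$ are completely positive: complete positivity ensures that tensoring a positive element with the positive output of the inner recursion remains positive, and $\phi_{H,0}$ is a state. I would therefore first invoke the assumption that $\varphi_{H,O}$ is a state, and fall back on this complete-positivity induction if needed. Real-valuedness then follows, and the maximum exists.
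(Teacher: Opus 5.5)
Your proposal is correct and matches the paper's proof essentially step for step. Both argue compactness of $\mathcal{P}_1(\mathcal{H})$ and of the finite product $\mathcal{X}_n$, then continuity of $\psi_n=\varphi_{H,O}\circ\iota$ as a linear functional composed with a multilinear map in finite dimensions, then the extreme value theorem. Your extra positivity step (a tensor product of projections is a projection, so $\psi_n\ge 0$) is a genuine improvement, since the paper simply declares $\psi_n$ to be $\mathbb{R}$-valued without justification. One small correction to your fallback argument: what does the work there is plain positivity of $\mathcal{E}_{H;m}$ on its tensor-product domain, together with the fact that a tensor product of positive operators is positive, not complete positivity.
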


\begin{proof}
The proof relies on three interconnected facts: the compactness of the pure-state manifold, the continuity of the joint expectation functional, and the topological structure of finite products of compact spaces.

First, consider the search space. The set of rank-one projections \(\mathcal{P}_1(\mathcal{H})\) is norm-closed and bounded within \(\mathcal{B}(\mathcal{H})\); its identification with the complex projective space \(\mathbb{C}P^{N-1}\) via the map \(\xi \mapsto |\xi\rangle\langle\xi|\) (for \(\|\xi\|=1\)) endows it with the structure of a compact, connected   manifold \cite{Nielsen2000}. Since we consider a finite time horizon \(n+1\), the Cartesian product
\[
\mathcal{X}_n := \prod_{m=0}^{n} \mathcal{P}_1(\mathcal{H})
\]
is compact in the product topology (by Tychonoff's theorem). This topology is equivalent to that induced by any product metric, such as the sum of Fubini–Study distances, which makes \(\mathcal{X}_n\) a compact metric space \cite{GeSunLee01, ticozzi2008}.

Second, define the objective functional
\[
\psi_n: \mathcal{X}_n \longrightarrow \mathbb{R}, \qquad \psi_n(p_0,\dots,p_n) := \varphi_{H,O}\!\left(\bigotimes_{m=0}^{n} (p_m \otimes q_m)\right).
\]

We establish the continuity of \(\psi_n\). The map
\[
\iota: \mathcal{X}_n \longrightarrow \bigotimes_{m=0}^{n} \bigl(\mathcal{B}(\mathcal{H}) \otimes \mathcal{B}(\mathcal{K})\bigr), \quad (p_0,\dots,p_n) \mapsto \bigotimes_{m=0}^{n} (p_m \otimes q_m)
\]
is continuous because, in finite dimensions, the tensor product operation and the multiplication by a fixed operator \(q_m\) are jointly continuous in the norm topology. Furthermore, \(\varphi_{H,O}\) is a linear functional on the finite-dimensional C$^*$-algebra generated by the process up to time \(n\) \cite{ASS20}. Every linear functional on a finite-dimensional normed space is bounded, hence (norm-)continuous. Consequently, \(\psi_n = \varphi_{H,O} \circ \iota\) is a composition of continuous maps and is therefore continuous on \(\mathcal{X}_n\).

Finally, by the extreme value theorem (applicable to continuous real-valued functions on compact topological spaces), a continuous function on a compact set attains its supremum. Since \(\mathcal{X}_n\) is compact and \(\psi_n\) is continuous, there exists a tuple \((p_0^*,\dots,p_n^*) \in \mathcal{X}_n\) such that
\[
\psi_n(p_0^*,\dots,p_n^*) = \max_{(p_0,\dots,p_n)\in\mathcal{X}_n} \psi_n(p_0,\dots,p_n).
\]
This proves the existence of an optimal hidden quantum path.
\end{proof}

\begin{remark}
Theorem \ref{thm:optimal-sequence-existence} guarantees that the quantum decoding problem is well-posed, but it does not prescribe a constructive method. The quantum Viterbi algorithm \cite{Li2024, MRDFS22} aims to compute this optimum through a quantum dynamical programming recursion, analogous to its classical counterpart \cite{Rab2002, FV05}. The search occurs within the ambient geometry of \(\mathcal{X}_n \cong (\mathbb{C}P^{N-1})^{n+1}\), a manifold whose curvature  influences gradient-based or geodesic search strategies. This geometric perspective connects to optimal control on homogeneous spaces   and to the simulation of quantum walks on complex projective spaces \cite{Ambainis2003, ApersSarletteTicozzi18}. The finiteness of the time horizon is crucial; it reduces an infinite-dimensional path optimization to a finite-dimensional variational problem on a compact manifold, enabling the use of global optimization techniques from Riemannian geometry \cite{GT25}.
\end{remark}

\begin{remark}
In classical HMMs, the Viterbi algorithm searches over a finite set of discrete state sequences, where existence of a maximum is trivial and complexity is polynomial in the sequence length \cite{V03, Rab2002}. The quantum version replaces the finite set with the continuous manifold \(\mathcal{X}_n\). While Theorem \ref{thm:optimal-sequence-existence} assures existence, the computational complexity of finding the global maximum on a high-dimensional   manifold is generally non-trivial. This reflects a fundamental shift in the nature of decoding: from combinatorial optimization to continuous, geometric optimization on a curved space. Recent works on model reduction for quantum processes \cite{WhiteMahonyBrushe00, GT25} and spectral learning of HQMMs \cite{HsuKakadeZhang12, Srin2017} aim to tame this complexity by exploiting the low-rank or symmetric structure of the underlying quantum Markov process.
\end{remark}

% \section{The Quantum Viterbi Algorithm: Dynamic Programming Principle}\label{Sect:QVA_Dyn}

\subsection{Quantum Viterbi Principle: Optimal Hidden Paths via Backward Optimization}

Theorem~\ref{thm:optimal-sequence-existence} ensures the existence of an optimal
hidden trajectory \((p_0^*,\dots,p_n^*)\) that maximizes the joint functional
\(\psi_n\). This provides the variational foundation for a Viterbi-type
backward-induction scheme in the quantum setting.

Since \(\psi_n\) is built from a nested composition of quantum operations, the
optimization problem is naturally solved from the final time step backwards.
At each stage \(m\), the optimal choice \(p_m^*\) depends on the already
optimized future segment \((p_{m+1}^*,\dots,p_n^*)\), and this dependence is
propagated backwards until the initial time. The compactness of
\(\mathcal{P}_1(\mathcal{H})\) guarantees that each intermediate maximization
problem admits a solution.

Formally, we are interested in
\begin{equation}\label{eq:psi-backward}
\begin{aligned}
\psi_n(p_0^*,\dots,p_n^*)
&= \max_{(p_0,\dots,p_n)\in\mathcal{X}_n} \psi_n(p_0,\dots,p_n) \\
&= \max_{(p_0,\dots,p_n)\in\mathcal{X}_n}
   \varphi_{H,0}\Bigl(
   \mathcal{F}^{(0)}_{p_0,q_0}\bigl(
   \mathcal{F}^{(1)}_{p_1,q_1}\bigl(
   \cdots
   \mathcal{F}^{(n)}_{p_n,q_n}(I)
   \cdots
   \bigr)
   \bigr)
   \Bigr) \\
&= \max_{p_0\in\mathcal{P}_1(\mathcal{H})}
   \max_{p_1\in\mathcal{P}_1(\mathcal{H})}
   \cdots
   \max_{p_n\in\mathcal{P}_1(\mathcal{H})}
   \varphi_{H,0}\Bigl(
   \mathcal{F}^{(0)}_{p_0,q_0}\bigl(
   \mathcal{F}^{(1)}_{p_1,q_1}\bigl(
   \cdots
   \mathcal{F}^{(n)}_{p_n,q_n}(I)
   \cdots
   \bigr)
   \bigr)
   \Bigr),
\end{aligned}
\end{equation}
where \(\mathcal{X}_n := \mathcal{P}_1(\mathcal{H})^{\times(n+1)}\). By
Theorem~\ref{thm:optimal-sequence-existence}, each maximization over
\(\mathcal{P}_1(\mathcal{H})\) is well-defined, so the iterated maximization in
\eqref{eq:psi-backward} makes sense.

\medskip

From a procedural point of view, we construct the optimal path by backward
selection of conditional optimizers. For fixed prefix \((p_0,\dots,p_{n-1})\),
define
\[
\widetilde{p}_n(p_0,\dots,p_{n-1})
\in \operatorname{argmax}_{p_n\in\mathcal{P}_1(\mathcal{H})}
\psi_n(p_0,\dots,p_{n-1},p_n),
\]
so that
\[
\max_{p_n\in\mathcal{P}_1(\mathcal{H})}
\psi_n(p_0,\dots,p_{n-1},p_n)
=
\psi_n\bigl(p_0,\dots,p_{n-1},\widetilde{p}_n(p_0,\dots,p_{n-1})\bigr).
\]
Intuitively, \(\widetilde{p}_n\) assigns to each history \((p_0,\dots,p_{n-1})\)
a best possible terminal hidden state at time \(n\).

We then move one step backward and, for fixed prefix
\((p_0,\dots,p_{n-2})\), optimize at stage \(n-1\) while taking into account
that time \(n\) will be chosen optimally:
\[
\widetilde{p}_{n-1}(p_0,\dots,p_{n-2})
\in \operatorname{argmax}_{p_{n-1}\in\mathcal{P}_1(\mathcal{H})}
\psi_n\bigl(p_0,\dots,p_{n-2},p_{n-1},
           \widetilde{p}_n(p_0,\dots,p_{n-2},p_{n-1})\bigr),
\]
so that
\[
\max_{p_{n-1}\in\mathcal{P}_1(\mathcal{H})}
\psi_n\bigl(p_0,\dots,p_{n-2},p_{n-1},
           \widetilde{p}_n(p_0,\dots,p_{n-2},p_{n-1})\bigr)
=
\psi_n\bigl(p_0,\dots,p_{n-2},
           \widetilde{p}_{n-1}(p_0,\dots,p_{n-2}),
           \widetilde{p}_n(\cdots)\bigr).
\]

Continuing inductively, we define for each \(m=n-2,\dots,1,0\) a measurable
selection
\[
\widetilde{p}_m(p_0,\dots,p_{m-1})
\in \operatorname{argmax}_{p_m\in\mathcal{P}_1(\mathcal{H})}
\psi_n\bigl(
p_0,\dots,p_{m-1},p_m,
\widetilde{p}_{m+1}(p_0,\dots,p_m),
\dots,
\widetilde{p}_n(p_0,\dots,p_{n-1})
\bigr),
\]
thereby propagating the optimal future decisions backward to the initial time.

At the end of this backward pass, we obtain the optimal initial hidden state
by solving the one-dimensional problem
\[
p_0^* \in \operatorname{argmax}_{p_0\in\mathcal{P}_1(\mathcal{H})}
\psi_n\bigl(
p_0,
\widetilde{p}_1(p_0),
\dots,
\widetilde{p}_n(p_0,\widetilde{p}_1(p_0),\dots,\widetilde{p}_{n-1}(\cdots))
\bigr).
\]
The associated Viterbi trajectory is then reconstructed forward via the
recursion
\[
p_1^* := \widetilde{p}_1(p_0^*),\quad
p_2^* := \widetilde{p}_2(p_0^*,p_1^*),\quad
\dots,\quad
p_n^* := \widetilde{p}_n(p_0^*,\dots,p_{n-1}^*).
\]

By construction, \((p_0^*,\dots,p_n^*)\) realizes the global maximum of
\(\psi_n\) over \(\mathcal{X}_n\). In other words, this backward-induction
procedure implements a quantum analogue of the classical Viterbi algorithm:
local optimization at each stage, conditioned on an already optimal “future”,
yields the globally optimal hidden quantum path consistent with the observed
sequence.
\begin{algorithm}[H]
\caption{Quantum Viterbi via Backward Selectors}
\begin{algorithmic}[1]
\State \textbf{Input:}
\[
\text{HQMM } 
(\varphi_{H,0},(\mathcal{E}_{H;m})_{m=0}^n,(\mathcal{E}_{H,O;m})_{m=0}^n),
\quad
\text{Observation sequence } (q_0,\dots,q_n)\in\mathcal{P}_1(\mathcal{K})^{n+1}.
\]
\State \textbf{Output:} $(p_0^*,\dots,p_n^*)\in\mathcal{P}_1(\mathcal{H})^{n+1}$, $\psi_n(p_0^*,\dots,p_n^*)$
\State

\State \textbf{Global functional:}
\[
\psi_n(p_0,\dots,p_n)
:= \varphi_{H,O}\!\Bigl(\bigotimes_{m=0}^n (p_m \otimes q_m)\Bigr).
\]

\State \textbf{Backward selectors:}
\State For $m=n$:
\[
\widetilde{p}_n(p_0,\dots,p_{n-1})
\in \arg\max_{p_n\in\mathcal{P}_1(\mathcal{H})}
      \psi_n(p_0,\dots,p_{n-1},p_n).
\]
\State For $m=n-1$:
\[
\widetilde{p}_{n-1}(p_0,\dots,p_{n-2})
\in \arg\max_{p_{n-1}\in\mathcal{P}_1(\mathcal{H})}
      \psi_n\bigl(p_0,\dots,p_{n-2},p_{n-1},
                  \widetilde{p}_n(p_0,\dots,p_{n-2},p_{n-1})\bigr).
\]
\State For $m<n$:
\[
\widetilde{p}_m(p_0,\dots,p_{m-1})
\in \arg\max_{p_m\in\mathcal{P}_1(\mathcal{H})}
      \psi_n\bigl(p_0,\dots,p_{m-1},p_m,
                  \widetilde{p}_{m+1}(p_0,\dots,p_m),
                  \dots,
                  \widetilde{p}_n(p_0,\dots,p_{n-1})\bigr).
\]

\State \textbf{Initial maximizer:}
\[
F(p_0)
:= \psi_n\bigl(p_0,
               \widetilde{p}_1(p_0),
               \dots,
               \widetilde{p}_n(p_0,\dots,\widetilde{p}_{n-1}(\cdots))\bigr),
\quad
p_0^* \in \arg\max_{p_0\in\mathcal{P}_1(\mathcal{H})} F(p_0).
\]

\State \textbf{Forward reconstruction:}
\[
p_1^* := \widetilde{p}_1(p_0^*),\quad
p_2^* := \widetilde{p}_2(p_0^*,p_1^*),\ \dots,\ 
p_n^* := \widetilde{p}_n(p_0^*,\dots,p_{n-1}^*).
\]

\State \textbf{Optimal value:}
\[
\psi_n(p_0^*,\dots,p_n^*)
= \max_{(p_0,\dots,p_n)\in\mathcal{P}_1(\mathcal{H})^{n+1}} \psi_n(p_0,\dots,p_n).
\]

\end{algorithmic}
\end{algorithm}

\section{Classical Limit and Strict Quantum Optimality in Viterbi Decoding}\label{Sect_Class}

\subsection{Recovering Classical Viterbi Decoding}
 
A fundamental consistency requirement for any quantum generalization of a classical algorithm is that it must reduce to its classical counterpart under appropriate structural constraints. We establish here the precise conditions under which the quantum Viterbi algorithm collapses to the classical Viterbi algorithm \cite{V03, Rab2002}, thereby demonstrating that our quantum framework properly extends classical hidden Markov theory.

\begin{theorem}\label{thm:quantum-classical-reduction}
Let $\mathcal{H}$ be an $N$-dimensional Hilbert space with distinguished orthonormal bases $\{|h_j\rangle\}_{j\in D}$ for the hidden system and $\{|q_k\rangle\}_{k\in D}$ for observations, where $D = \{1,\dots,N\}$. Consider a hidden quantum Markov process satisfying:
\begin{enumerate}
    \item The hidden transition operator is diagonal in the hidden basis:
    \[
    \mathcal{E}_H(X) = \sum_{i,j\in D} p_{H;ij} \langle h_i|X|h_j\rangle |h_i\rangle\langle h_j|,
    \]
    where $P_H = (p_{H;ij})$ is a stochastic matrix.

    \item The emission operator is diagonal in both bases:
    \[
    \mathcal{E}_{O,H}(X\otimes Y) = \sum_{i,j\in D} p_{O|H}(j|i) \langle h_i|X|h_i\rangle \langle q_j|Y|q_j\rangle |h_i\rangle\langle h_i|,
    \]
    where $p_{O|H}(\cdot|\cdot)$ defines observation probabilities conditioned on hidden states.

    \item The initial state is diagonal: $\phi_{H,0}(X) = \sum_{i\in D} p^{(0)}_i \langle h_i|X|h_i\rangle$.
\end{enumerate}
Then for any observation sequence $(q_1,\dots,q_n) = (|q_{k_1}\rangle\langle q_{k_1}|, \dots, |q_{k_n}\rangle\langle q_{k_n}|)$, the quantum Viterbi algorithm computes exactly the classical Viterbi score:
\[
\max_{p_1,\dots,p_n \in \mathcal{P}_1(\mathcal{H})} \varphi_{H,O}\!\left(\bigotimes_{m=1}^n (p_m \otimes q_m)\right)
= \max_{i_1,\dots,i_n \in D} p^{(0)}_{i_1} \prod_{m=1}^{n-1} p_{H;i_m i_{m+1}} \prod_{m=1}^n p_{O|H}(k_m|i_m).
\]
Moreover, any optimal quantum trajectory must be of the form $p_m^* = |h_{i_m^*}\rangle\langle h_{i_m^*}|$ for some classical optimal path $(i_1^*,\dots,i_n^*)$.
\end{theorem}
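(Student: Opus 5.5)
The plan is to reduce the quantum score functional, restricted to pure effects, to a multilinear function of classical probability vectors, and then use the fact that a multilinear form with nonnegative coefficients on a product of simplices is maximized at a vertex. Write each candidate hidden effect as $p_m=|\xi_m\rangle\langle\xi_m|$ with $\|\xi_m\|=1$. Define the weights $w_m(i):=|\langle h_i|\xi_m\rangle|^2$. Then $w_m(i)\ge 0$ and $\sum_i w_m(i)=1$, so $w_m$ is a point of the simplex $\Delta_N$.

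\textbf{Step 1: reduce the emission.} By hypothesis 2, with $q_m=|q_{k_m}\rangle\langle q_{k_m}|$, we get
\[
\mathcal{E}_{O,H}(p_m\otimes q_m)=\sum_{i\in D}p_{O|H}(k_m|i)\,w_m(i)\,|h_i\rangle\langle h_i| .
\]
This operator is diagonal in the hidden basis. All off-diagonal information about $\xi_m$, i.e.\ its phases and coherences, is destroyed at this point. This is the key structural observation.

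\textbf{Step 2: reduce the nested expression.} Feed these diagonal operators into the nested formula \eqref{eq:joint-expectation-recursive}. On diagonal inputs, the transition map of hypothesis 1 acts as the classical stochastic kernel. Concretely, we interpret $\mathcal{E}_H(X\otimes Y)$ for diagonal $X,Y$ as $\sum_i X_{ii}\sum_j p_{H;ij}Y_{jj}|h_i\rangle\langle h_i|$. This is the main point where the somewhat informal statement of hypothesis 1 must be pinned down, and I expect it to be the delicate step. I will argue that every argument reaching $\mathcal{E}_H$ is already diagonal, so only the diagonal action matters.

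An induction from the innermost block $\mathcal{F}^{(n)}$ outward then shows that each intermediate operator is diagonal. Applying the diagonal initial state of hypothesis 3 gives
\[
\psi_n(p_1,\dots,p_n)=\sum_{i_1,\dots,i_n\in D}\Bigl(\prod_{m=1}^n w_m(i_m)\Bigr)\,S(i_1,\dots,i_n),
\]
where
\[
S(i_1,\dots,i_n):=p^{(0)}_{i_1}\prod_{m=1}^{n-1}p_{H;i_mi_{m+1}}\prod_{m=1}^n p_{O|H}(k_m|i_m)
\]
is the classical Viterbi path score.

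\textbf{Step 3: compare with the classical maximum.} The coefficients $\prod_m w_m(i_m)$ are nonnegative and sum to $1$, since this sum is $\prod_m\sum_i w_m(i)=1$. Hence $\psi_n\le \max_{\mathbf i}S(\mathbf i)$. Conversely, choosing $\xi_m=h_{i_m}$ makes each $w_m$ a vertex of the simplex, so that $\psi_n=S(\mathbf i)$. Together these give the claimed equality of the quantum and classical optima. Existence of the quantum maximum is already guaranteed by Theorem~\ref{thm:optimal-sequence-existence}, and the computation above exhibits it explicitly.

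\textbf{Step 4: the ``moreover'' part.} Suppose $\psi_n(p^*)=\max S$. Then the convex combination above puts weight $1$ on the set $\mathcal{I}^*$ of classical maximizers, which requires $\max S>0$. When $\mathcal{I}^*$ is a single path $\mathbf i^*$, this forces $\prod_m w_m(i^*_m)=1$, hence every $w_m(i^*_m)=1$. That means $|\xi_m\rangle$ equals $h_{i^*_m}$ up to a phase, so $p_m^*=|h_{i^*_m}\rangle\langle h_{i^*_m}|$.

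If there are ties, the conclusion can fail. For example, if two optimal paths differ only at one time $m$, then a superposition of the two basis vectors at time $m$ is also optimal. I will therefore state this part under a uniqueness assumption on the classical optimum (with $\max S>0$). Alternatively, in the general case I will state it as: every optimal $p^*$ has weights supported on optimal paths, and some optimal trajectory is of the basis form. I expect this caveat, together with the precise meaning of $\mathcal{E}_H$ on tensor inputs from Step 2, to be the only real obstacles. The rest is direct substitution.
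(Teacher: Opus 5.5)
Your proof is correct and uses the same mechanism as the paper. The emission map of hypothesis~2 erases all coherence of $p_m$, leaving only the weights $w_m(i)=|\langle h_i|\xi_m\rangle|^2$. On diagonal inputs the transition acts as the classical kernel; the paper's proof adopts exactly your two-argument reading $\mathcal{E}_H(X\otimes Y)=\sum_{i,j}p_{H;ij}\alpha_i\beta_j|h_i\rangle\langle h_i|$, so your ``delicate step'' is no weaker than the original.

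The difference is in how the comparison with the classical score is closed. The paper introduces value functions $v_m$ obeying the classical Viterbi recursion. It then argues that non-diagonal $p_m$ only produce convex combinations of diagonal operators, so ``by concavity'' the optimum lies at an extreme point. You instead expand $\psi_n=\sum_{\mathbf i}\bigl(\prod_m w_m(i_m)\bigr)S(\mathbf i)$, which gives the upper bound and its attainment at once, with no recursion needed.

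Your route is the accurate one. The relevant property is that $\psi_n$ is affine in each $w_m$, i.e.\ a convex combination of path scores, not that it is concave. Affineness only guarantees that \emph{some} maximizer is a vertex of the product of simplices.

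Hence your caveat on the ``moreover'' clause is justified, and the paper's assertion that every optimal trajectory is diagonal is false as stated. If two optimal classical paths differ at a single time, a superposition of the two basis vectors at that time is also optimal. If $\max S=0$, every trajectory is optimal. Your restricted formulations are the correct replacements: uniqueness of the classical optimum, or ``the weights of any optimizer are supported on optimal paths, and some optimizer is diagonal.''

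One small point of phrasing: the support statement in your Step~4 holds without assuming $\max S>0$. The uniqueness hypothesis already forces $\max S>0$ whenever $N\ge 2$.
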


\begin{proof}
The proof proceeds by establishing an isomorphism between the diagonal subalgebra and classical probability. Let $\mathsf{Eff}(\mathcal{H})_{\text{diag}} = \operatorname{span}\{|h_i\rangle\langle h_i| : i \in D\} \subset \mathcal{B}(\mathcal{H})$ denote the commutative $C^*$-algebra of diagonal operators. By assumptions (1)-(3), the entire quantum hidden Markov process restricts to $\mathsf{Eff}(\mathcal{H})_{\text{diag}}$.

Consider the recursion of Theorem \ref{thm:optimal-sequence-existence}. Under the diagonal emission assumption, for any $p = |h_i\rangle\langle h_i|$ and $q = |q_k\rangle\langle q_k|$, we have:
\[
\mathcal{E}_{O,H}(p \otimes q) = p_{O|H}(k|i) |h_i\rangle\langle h_i|.
\]
The transition operator, when applied to diagonal inputs $X = \sum_i \alpha_i |h_i\rangle\langle h_i|$ and $Y = \sum_j \beta_j |h_j\rangle\langle h_j|$, yields:
\[
\mathcal{E}_H(X \otimes Y) = \sum_{i,j} p_{H;ij} \alpha_i \beta_j |h_i\rangle\langle h_i|.
\]
Thus, if we define classical value functions $v_m: D \to \mathbb{R}$ by $v_m(i) = \phi_{H,0}(V_m(|h_i\rangle\langle h_i|))$, they satisfy the classical Viterbi recursion:
\[
v_n(i) = p_{O|H}(k_n|i), \quad
v_m(i) = \max_{j\in D} p_{H;ij} p_{O|H}(k_m|i) v_{m+1}(j).
\]
The solution of this recursion is precisely the classical Viterbi algorithm \cite{V03, FV05}.

To show that no non-diagonal projection can yield a higher score, note that the initial functional $\phi_{H,0}$ and all transition/emission maps are completely positive and preserve diagonal observable. For any projection $p \in \mathcal{P}_1(\mathcal{H})$, let $p = \sum_{i,j} c_i\overline{c_j} |h_i\rangle\langle h_j|$ be with $\sum_i |c_i|^2 = 1$. Then:
\[
\mathcal{E}_{O,H}(p \otimes |q_k\rangle\langle q_k|) = \sum_i |c_i|^2 p_{O|H}(k|i) |h_i\rangle\langle h_i|.
\]
This is a convex combination of diagonal operators. By the concavity of the classical recursion, the maximum must occur at an extreme point of the set of such convex combinations, i.e., at some $|c_i| = 1$, $c_j = 0$ for $j \neq i$. Hence, the optimal quantum projections must be diagonal.
\end{proof}

Theorem~\ref{thm:quantum-classical-reduction} establishes that, under assumptions of natural diagonalization in the channels and initial state, the quantum Viterbi recursion collapses exactly to the classical Viterbi algorithm in a finite hidden state space. This provides the essential consistency check: whenever the underlying dynamics are effectively commutative, our HQMM formalism does not invent new behavior, but faithfully reproduces the standard hidden Markov decoding theory.

\section{Application to Quantum Memory Advantages}\label{sect_App}
\label{sec:spin-ensemble}
We illustrate the quantum Viterbi framework in a concrete setting motivated by quantum memory models. The hidden system is a single effective qubit, described at each discrete time \(m \in \mathbb{N}\) by the two-dimensional Hilbert space \(\mathcal{H} = \mathbb{C}^2\) and observable algebra \(\mathcal{B}_{H;m} = \mathcal{B}(\mathbb{C}^2) \cong M_2(\mathbb{C})\).  This minimal setup already realizes the basic architecture used in quantum implementations of hidden Markov models, where a finite-dimensional quantum system simulates a classical stochastic process while potentially using less memory than any classical HMM with the same predictive capabilities \cite{Elliot21}. The system may be implemented, for example, as a spin-\(\tfrac{1}{2}\) particle, a two-level atom. The observation space is likewise two-dimensional, \(\mathcal{K} = \mathbb{C}^2\), with orthonormal basis \(\{|+\rangle,|-\rangle\}\) representing binary measurement outcomes. Hidden configurations at time \(m\) are pure effects \(p_m \in \mathcal{P}_1(\mathcal{H}) \subset \mathsf{Eff}(\mathcal{H})\), while observations are pure effects \(q_m \in \mathcal{P}_1(\mathcal{K})\), so the quantum Viterbi recursion of Theorem~\ref{thm:optimal-sequence-existence} applies directly to this qubit memory model.

We specify the hidden dynamics as a driven, partially dephasing qubit. Let
\[
\sigma_x
=
\begin{pmatrix}
0 & 1\\
1 & 0
\end{pmatrix}
=
|0\rangle\langle 1| + |1\rangle\langle 0|
\]
denote the Pauli-\(X\) operator. For each time \(m\), define the unitary
\[
U_m := e^{-i\phi_m\sigma_x}
  = \cos\phi_m \, \mathbb{I} - i \sin\phi_m \, \sigma_x,
\]
with \(\phi_m \in \mathbb{R}\). This is the standard single-qubit rotation about the \(x\)-axis of the Bloch sphere and generates coherent Rabi oscillations between \(|0\rangle\) and \(|1\rangle\), as in driven two-level systems used to implement and manipulate elementary quantum memories \cite{Nielsen2000}. Such coherent gates are exactly the type of unitary updates used in quantum stochastic simulators, where a low-dimensional quantum system realizes a process as a sequence of controlled quantum operations \cite{ElliotiGu24}.

To model loss of phase coherence, we introduce the \(Z\)-dephasing channel \(\Pi_Z : \mathcal{B}(\mathbb{C}^2)\to\mathcal{B}(\mathbb{C}^2)\) given by
\[
\Pi_Z(a) = |0\rangle\langle0|\,a\,|0\rangle\langle0|
          + |1\rangle\langle1|\,a\,|1\rangle\langle1|,
\]
which removes off-diagonal entries in the computational basis and is the canonical phase-damping channel for two-level memories \cite{Banchi24}. At time \(m\), we interpolate between coherent and dephasing evolution via
\[
\Phi_m^{\mathrm{S}}(a)
  = (1-\theta_m)\,U_m a U_m^\dagger + \theta_m\,\Pi_Z(a),
  \qquad \theta_m \in [0,1],
\]
and, in the Heisenberg picture, the dual map on effects
\[
\Phi_m(X)
  = (1-\theta_m )\,U_m^\dagger X U_m + \theta_m \,\Pi_Z(X).
\]
The parameter \(\theta_m \) thus interpolates between fully coherent evolution (\(\theta_m  = 0\)) and fully dephased dynamics (\(\theta_m  = 1\)). This directly mirrors models used to quantify how much coherence and hence quantum memory can be retained in finite-dimensional quantum simulators of stochastic processes under noisy evolution \cite{Elliot21}.

\subsection{Coherent hidden dynamics with controlled decoherence}

Within the hidden quantum Markov framework, a one-step hidden transition is encoded by a completely positive, identity-preserving map \(\mathcal{E}_{H;m}:\mathcal{B}_{H;m}\otimes\mathcal{B}_{H;m+1}\to\mathcal{B}_{H;m}\). For the present model we define
\[
\mathcal{E}_{H;m}(X_1\otimes X_2) := \Phi_m(X_1)\,\mathrm{tr}(X_2),
\]
where \(\mathrm{tr}\) is the normalized trace on \(M_2(\mathbb{C})\).

\begin{lemma}
The map \(\mathcal{E}_{H;m} : \mathcal{B}_{H;m}\otimes\mathcal{B}_{H;m+1} \to \mathcal{B}_{H;m}\) defined above is completely positive and identity-preserving, hence a valid quantum transition expectation in the sense of operator-algebraic quantum Markov chains.
\end{lemma}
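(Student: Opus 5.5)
The map factors as $\mathcal{E}_{H;m} = \Phi_m\otimes \mathrm{tr}$, followed by the identification $\mathcal{B}_{H;m}\otimes\mathbb{C}\cong\mathcal{B}_{H;m}$. I will therefore establish complete positivity and unitality separately for $\Phi_m$ and for $\mathrm{tr}$, and then use that a tensor product of completely positive maps is completely positive.

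\emph{Complete positivity of $\Phi_m$.} I will exhibit a Kraus form. The map $X\mapsto U_m^\dagger X U_m$ has the single Kraus operator $U_m^\dagger$. The map $\Pi_Z(X)=\sum_{k=0,1}|k\rangle\langle k|X|k\rangle\langle k|$ has the Kraus operators $|k\rangle\langle k|$. Since $\theta_m\in[0,1]$, the weights $1-\theta_m$ and $\theta_m$ are nonnegative, so
\[
\Phi_m(X)=\sum_{\ell}V_\ell^\dagger X V_\ell,\qquad V_0=\sqrt{1-\theta_m}\,U_m,\quad V_{k+1}=\sqrt{\theta_m}\,|k\rangle\langle k|\ (k=0,1).
\]
Any map of this form is completely positive.

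\emph{Complete positivity of $\mathrm{tr}$.} The normalized trace $\mathrm{tr}(X)=\tfrac12\sum_k\langle k|X|k\rangle$ is a positive linear functional. A positive functional into $\mathbb{C}$ is automatically completely positive: it is a state up to normalization, so it has a GNS/Kraus form with operators $\tfrac{1}{\sqrt2}\langle k|$.

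\emph{Complete positivity of $\mathcal{E}_{H;m}$.} Let $R_\ell$ denote the Kraus operators of $\mathrm{tr}$, namely $R_k=\tfrac{1}{\sqrt2}\langle k|$. Then
\[
\mathcal{E}_{H;m}(Z)=\sum_{\ell,k}(V_\ell\otimes R_k^\dagger)^\dagger Z\,(V_\ell\otimes R_k^\dagger)
\]
holds on elementary tensors $X_1\otimes X_2$. By linearity it holds on all of $\mathcal{B}_{H;m}\otimes\mathcal{B}_{H;m+1}$, and this Kraus form shows complete positivity.

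\emph{Unitality.} We have $\Phi_m(\mathbb{I})=(1-\theta_m)U_m^\dagger U_m+\theta_m(|0\rangle\langle0|+|1\rangle\langle1|)=\mathbb{I}$, and $\mathrm{tr}(\mathbb{I})=1$ by the choice of the normalized trace. Hence $\mathcal{E}_{H;m}(\mathbb{I}\otimes\mathbb{I})=\mathbb{I}$. These are exactly the defining requirements on hidden transition maps in Definition~\ref{def-QHMP}, which gives the claim.

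The only delicate step is the middle one: extending the formula from elementary tensors to the whole tensor product, and confirming that complete positivity survives tensoring with a functional. The explicit joint Kraus representation handles both at once. It also sidesteps the possible confusion of the unnormalized trace, which would fail unitality since $\operatorname{Tr}(\mathbb{I})=2$. This is why the statement specifies the normalized trace.
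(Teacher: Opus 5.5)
Your proof is correct and takes essentially the same route as the paper: both show that $\Phi_m$ (unitary conjugation plus the Kraus form of $\Pi_Z$) and the normalized trace are completely positive and unital, and then combine them. The only difference is cosmetic. The paper writes $\mathcal{E}_{H;m}=\Phi_m\circ(\mathrm{id}\otimes\mathrm{tr})$ and appeals to closure of CP maps under convex combination and composition, whereas you assemble one joint Kraus representation, which makes the extension from elementary tensors explicit.
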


\begin{proof}
Unitary conjugation \(X\mapsto U_m^\dagger X U_m\) is a *-automorphism of \(\mathcal{B}(\mathcal{H})\), so it is completely positive and unital. The dephasing map \(\Pi_Z\) is also completely positive and unital, since it can be written with Kraus operators \(|0\rangle\langle0|\) and \(|1\rangle\langle1|\). Therefore their convex combination \(\Phi_m\) is completely positive and unital. The map \(\mathrm{id}\otimes\mathrm{tr}\) acts as the identity on the first factor and as a state on the second, so it is completely positive and sends \(\mathbb{I}\otimes \mathbb{I}\) to \(\mathbb{I}\). The composition \(\mathcal{E}_{H;m} = \Phi_m \circ (\mathrm{id}\otimes\mathrm{tr})\) is completely positive and satisfies
\[
\mathcal{E}_{H;m}(\mathbb{I}\otimes \mathbb{I})
  = \Phi_m(\mathrm{tr}(\mathbb{I})\,\mathbb{I})
  = \Phi_m(\mathbb{I})
  = \mathbb{I},
\]
using unitality of \(\Phi_m\). This proves the claim.
\end{proof}

This transition expectation realizes a hidden quantum Markov chain on the effect set \(\mathsf{Eff}(\mathcal{H})\), interpolating between a genuinely quantum Markov evolution (for small \(p_m\)) and a classical Markov chain on the diagonal subalgebra (for \(p_m = \mathbb{I}\)). In particular, it fits into the general framework of hidden quantum Markov processes where the hidden dynamics live on a non-commutative operator algebra and the observed process is obtained by applying suitable instruments \cite{AGLS24Q}.

\begin{remark}
The two-level setting is the simplest regime in which the non-commutative geometry of the effect set \(\mathsf{Eff}(\mathcal{H})\) becomes operationally relevant. The manifold \(\mathcal{P}_1(\mathcal{H})\) already contains coherent superpositions such as \(|+\rangle\langle+|\) that lie outside the commutative subalgebra generated by \(|0\rangle\langle0|\) and \(|1\rangle\langle1|\). This allows one to realize processes whose hidden dynamics cannot be captured by any classical Markov chain on two states and connects directly with the identification of quantum memory as a resource that can reduce the effective state-space dimension needed to simulate a given process \cite{Elliot21}. The same mechanism underlies more general trade-offs between accuracy and memory in higher-dimensional quantum simulations of stochastic processes, where one balances the amount of coherence retained in the hidden system against the achievable fidelity of the simulated output statistics \cite{Banchi24}.
\end{remark}

\subsection{Quantum measurement with coherent back-action}

We now specify the observation mechanism, which couples the hidden memory qubit to the two-dimensional measurement space \(\mathcal{K}\) and produces classical outputs while feeding back on the hidden effect. In the HQMM formalism this is described by a completely positive map \(\mathcal{E}_{H,O;m}:\mathcal{B}_{H;m}\otimes\mathcal{B}_{O;m}\to\mathcal{B}_{H;m}\) acting as a quantum instrument.

Fix a measurement strength parameter \(0<\eta<1\) and define
\[
L^{+}
  = \sqrt{1+\eta}\,|0\rangle\langle0|
    + \sqrt{1-\eta}\,|1\rangle\langle1|,
\qquad
L^{-}
  = \sqrt{1-\eta}\,|0\rangle\langle0|
    + \sqrt{1+\eta}\,|1\rangle\langle1|.
\]
A direct computation shows
\[
(L^{+})^\dagger L^{+} + (L^{-})^\dagger L^{-} = 2I,
\]
so that, up to normalization, \(\{L^{+},L^{-}\}\) define a two-outcome measurement on the hidden qubit. We associate \(L^{+}\) with the observation effect \(q^{(+)} = |+\rangle\langle+|\) and \(L^{-}\) with \(q^{(-)} = |-\rangle\langle-|\). On elementary tensors we set
\[
\mathcal{E}_{H,O;m}(A\otimes q^{(+)})
  = (L^{+})^\dagger A L^{+},
\qquad
\mathcal{E}_{H,O;m}(A\otimes q^{(-)})
  = (L^{-})^\dagger A L^{-},
\]
for \(A\in\mathcal{B}_{H;m}\), and extend linearly to \(\mathcal{B}_{H;m}\otimes\mathcal{B}_{O;m}\). This admits the Kraus form
\[
\mathcal{E}_{H,O;m}(X)
  = (W^{+})^\dagger X W^{+} + (W^{-})^\dagger X W^{-},
\qquad
W^{\pm} := L^{\pm}\otimes|\pm\rangle,
\]
and thus defines a completely positive quantum instrument in the usual sense \cite{Nielsen2000}.

The single parameter \(\eta\) interpolates between non-informative and projective measurements. When \(\eta=0\), one has \(L^{+} = L^{-} = \mathbb{I}\), so the emission leaves all hidden effects unchanged and carries no information about the hidden state. As \(\eta\to 1\), the operators \(L^{\pm}\) approach projectors onto \(|0\rangle\) and \(|1\rangle\), and the emission becomes a projective measurement in the computational basis, which is maximally informative but fully de-coheres the hidden qubit. For intermediate values \(0<\eta<1\), the measurement is weak: it partially distinguishes the basis states while preserving coherence in the hidden effect, exactly the regime relevant for hidden quantum memory effects under imperfect measurements \cite{Monras11,TaElM,TaBra25}.

To see the effect on the hidden effect set explicitly, let \(p_m = |\psi_m\rangle\langle\psi_m| \in \mathcal{P}_1(\mathcal{H})\) with
\[
|\psi_m\rangle = \alpha_m|0\rangle + \beta_m|1\rangle,
\qquad
|\alpha_m|^2 + |\beta_m|^2 = 1.
\]
A straightforward matrix computation in the basis \(\{|0\rangle,|1\rangle\}\) gives
\[
\mathcal{E}_{H,O;m}(p_m\otimes q^{(+)})
=
\begin{pmatrix}
(1+\eta)|\alpha_m|^2 & \sqrt{1-\eta^2}\,\alpha_m\overline{\beta_m} \\
\sqrt{1-\eta^2}\,\overline{\alpha_m}\beta_m & (1-\eta)|\beta_m|^2
\end{pmatrix}.
\]
The off-diagonal terms are proportional to \(\sqrt{1-\eta^2}\,\alpha_m\overline{\beta_m}\) and are nonzero whenever \(\alpha_m\beta_m\neq 0\) and \(\eta\neq 1\). Thus, for coherent inputs and weak measurement, the emission map preserves coherence in \(\mathsf{Eff}(\mathcal{H})\), maintaining access to non-commutative points of \(\mathcal{P}_1(\mathcal{H})\) that underpin quantum memory advantages in decoding \cite{TaElM,TaBra25,OZNPQ26}.
 
\subsection{Implications for quantum decoding superiority}

The hidden quantum Markov model defined above combines three structural ingredients that are crucial for quantum decoding advantages: coherent hidden states in \(\mathcal{P}_1(\mathcal{H})\), a tunable interplay between unitary rotations and dephasing in the transition expectations \(\mathcal{E}_{H;m}\), and a weak measurement instrument \(\mathcal{E}_{H,O;m}\) that partially extracts information while preserving coherence in the effect set. In combination with the quantum Viterbi recursion of Theorem~\ref{thm:optimal-sequence-existence}, this allows one to construct parameter regimes in which the supremum of the Viterbi functional over the full pure effect set \(\mathcal{P}_1(\mathcal{H})\) is strictly larger than its supremum over the classical subset of diagonal rank-one effects, even though both models use the same underlying Hilbert space and observation alphabet. This is in line with the general observation that quantum simulators of stochastic processes can achieve memory efficiencies or representational capabilities that are inaccessible to classical HMMs of the same nominal dimension \cite{Elliot21,Banchi24,ElliotiGu24,YangElliot25,Lu95,AGLS24Q}.

To make this more precise, fix a time index \(m=0\), an observation effect \(q_0 \in \mathcal{P}_1(\mathcal{K})\), and a positive operator \(V_1 \in \mathcal{B}_{H;1}\) that plays the role of a “continuation value” in the dynamic programming sense: it encodes the optimal future contribution to the decoding objective, conditional on optimally chosen hidden states at times \(1,\dots,n\). Let \(\varphi_{H,0}\) be an initial state on \(\mathcal{B}_{H;0}\). The one-step score functional is
\begin{equation}
\label{eq:one-step-functional}
F_0(p_0)
:=
\varphi_{H,0}\!\left(
\mathcal{E}_{H;0}\!\left(
\mathcal{E}_{H,O;0}(p_0\otimes q_0)\otimes V_1
\right)\right),
\qquad
p_0 \in \mathcal{P}_1(\mathcal{H}),
\end{equation}
which quantifies, within the full recursion, the contribution of choosing a given hidden pure effect \(p_0\) at time \(0\). In a classical HMM on a two-state space, one would restrict \(p_0\) to the finite set \(D_H = \{|0\rangle\langle0|,|1\rangle\langle1|\}\). The following theorem shows that, for suitable parameters of the qubit memory model, the supremum of \(F_0\) over the full pure effect set is strictly larger than its supremum over \(D_H\), and that any maximiser must be a coherent (off-diagonal) effect.

\begin{theorem}[Single-step quantum advantage]
\label{thm:coherent-optimal-m0}
There exist physically admissible parameters \((\eta_0,p_0,\phi_0)\), an observation \(q_0\), a continuation operator \(V_1\), and an initial functional \(\varphi_{H,0}\) such that the one-step functional \(F_0\) satisfies
\[
\sup_{p_0\in\mathcal{P}_{1}(\mathcal{H})} F_0(p_0)
\;>\;
\sup_{p_0\in D_H} F_0(p_0),
\]
where \(D_H=\{|0\rangle\langle0|,|1\rangle\langle1|\}\) denotes the classical subset of diagonal pure effects. Moreover, any hidden effect \(p_0^*\) achieving the supremum on the left must be coherent, in the sense that it has nonzero off-diagonal entries in the computational basis.
\end{theorem}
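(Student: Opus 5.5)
The plan is to reduce $F_0$ to a Rayleigh quotient of a fixed $2\times 2$ positive matrix, and then choose parameters that make this matrix non-diagonal. The first step is to unwind the definitions of the qubit memory model. Since $\mathcal{E}_{H;0}(X_1\otimes X_2)=\Phi_0(X_1)\,\mathrm{tr}(X_2)$, the continuation operator contributes only the scalar $c:=\mathrm{tr}(V_1)$. Hence
\[
F_0(p_0)=c\,\varphi_{H,0}\bigl(\Phi_0\bigl(\mathcal{E}_{H,O;0}(p_0\otimes q_0)\bigr)\bigr).
\]
I will take $q_0=q^{(+)}$ and write $L:=L^{+}$, which is self-adjoint, so that $\mathcal{E}_{H,O;0}(p_0\otimes q_0)=Lp_0L$. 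I will take $\varphi_{H,0}(X)=\operatorname{tr}(\rho X)$ for a density matrix $\rho$ (trace here unnormalized). By the Hilbert--Schmidt duality recalled in Section~\ref{sect-preli}, $\varphi_{H,0}\circ\Phi_0=\operatorname{tr}(\Phi_0^{\mathrm S}(\rho)\,\cdot\,)$. Setting $\sigma:=\Phi_0^{\mathrm S}(\rho)$ and $p_0=|\psi\rangle\langle\psi|$, this gives
\[
F_0(p_0)=c\,\langle\psi|\,L\sigma L\,|\psi\rangle .
\]
So $F_0$ is $c$ times the Rayleigh quotient of the positive matrix $M:=L\sigma L$.

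From this form the two suprema can be read off directly. The supremum over $\mathcal{P}_1(\mathcal{H})$ is $c\,\lambda_{\max}(M)$, and it is attained exactly at the top eigenvectors of $M$, consistent with Theorem~\ref{thm:optimal-sequence-existence}. The supremum over $D_H$ is $c\max(M_{00},M_{11})$. For a $2\times2$ Hermitian matrix with entries $a=M_{00}$, $d=M_{11}$, $b=M_{01}$, one has
\[
\lambda_{\max}=\tfrac{a+d}{2}+\sqrt{\bigl(\tfrac{a-d}{2}\bigr)^2+|b|^2}.
\]
This exceeds $\max(a,d)$ strictly whenever $b\neq0$. Moreover, when $b\neq 0$, neither $|0\rangle$ nor $|1\rangle$ is an eigenvector of $M$. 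Every maximiser is then a top eigenvector with both components nonzero, i.e.\ a coherent effect. So the whole theorem reduces to exhibiting admissible parameters with $M_{01}\neq0$ and $c>0$.

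For the concrete choice I will take:
\begin{itemize}
\item $\theta_0=0$ and $\phi_0=0$, so that $U_0=\mathbb{I}$ and $\sigma=\rho$ (the symbol $p_0$ in the parameter list I read as the dephasing weight $\theta_0$);
\item $\rho=|+\rangle\langle+|$ and $V_1=\mathbb{I}$, so that $c=1$;
\item any $\eta_0\in(0,1)$.
\end{itemize}
The same computation as for $\mathcal{E}_{H,O;m}(p_m\otimes q^{(+)})$ in the previous subsection then gives
\[
M=\tfrac12\begin{pmatrix}1+\eta_0&\sqrt{1-\eta_0^2}\\ \sqrt{1-\eta_0^2}&1-\eta_0\end{pmatrix}.
\]
Its off-diagonal entry is nonzero for $\eta_0<1$. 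Explicitly, the quantum value is $\tfrac12(1+1)=1$, since the trace is $1$ and the determinant is $0$. The classical value is $\tfrac{1+\eta_0}{2}<1$.

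I do not expect a real obstacle. The only delicate points are bookkeeping ones: keeping the Heisenberg/Schrödinger duality straight, and checking that the normalized trace $\mathrm{tr}$ in $\mathcal{E}_{H;0}$ only rescales by the positive constant $c$. As a robustness remark, one could also allow $\theta_0\in(0,1)$. Dephasing scales the off-diagonal entry of $\sigma$ by $1-\theta_0$, so the strict gap persists for all $\theta_0<1$.
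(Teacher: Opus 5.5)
Your proposal is correct and is essentially the paper's construction. The paper also takes $\theta_0=0$, $q_0=|+\rangle\langle+|$, $V_1=\mathbb{I}$ and initial state $|+\rangle\langle+|$, reduces $F_0$ to the quadratic form of $L^{+}|+\rangle\langle+|L^{+}$, and compares its top eigenvalue with its diagonal entries; it fixes $\eta_0=\tfrac12$ and uses $\phi_0=\pi/4$, which is equivalent to your $\phi_0=0$ because $|+\rangle$ is an eigenvector of $\sigma_x$. Your values $1$ versus $\tfrac{1+\eta_0}{2}$ are the correctly normalized ones, whereas the paper's $\tfrac12$ versus $\tfrac38$ at $\eta_0=\tfrac12$ carry an extra overall factor $\tfrac12$ (apparently from a normalized trace in $\varphi_{H,0}$), which does not affect the strict inequality.
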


\begin{proof}
We argue by explicit construction. Choose parameters as follows: set dephasing \(\theta_0 = 0\) so that the hidden transition at time \(0\) is purely unitary; choose \(\phi_0 = \pi/4\), giving \(U_0 = e^{-i(\pi/4)\sigma_x}\); set the measurement strength \(\eta_0 = 1/2\), so that
\[
L_0^{+} = \sqrt{\tfrac{3}{2}}\,|0\rangle\langle0| + \sqrt{\tfrac{1}{2}}\,|1\rangle\langle1|;
\]
fix the observation to be \(q_0 = |+\rangle\langle+|\), where \(|+\rangle = (|0\rangle+|1\rangle)/\sqrt2\); choose the continuation operator to be \(V_1 =  I_2\); and take the initial functional \(\varphi_{H,0}(X) = \mathrm{Tr}(aX)\) with \(a = |\phi_0\rangle\langle\phi_0|\) and \(|\phi_0\rangle = (|0\rangle+|1\rangle)/\sqrt2\).

With \(\theta_0=0\) and \(\mathrm{Tr}(V_1)=1\) (normalized trace), the hidden transition simplifies to \(\mathcal{E}_{H;0}(X_1\otimes V_1) = U_0^\dagger X_1 U_0\). Using that \(U_0|\phi_0\rangle = e^{-i\pi/4}|\phi_0\rangle\), one finds \(U_0 a U_0^\dagger = a\). Consequently, the one-step functional reduces to
\[
F_0(p_0) = \mathrm{Tr}\bigl(a\,\mathcal{E}_{H,O;0}(p_0\otimes q_0)\bigr),
\]
and, since \(q_0 = |+\rangle\langle+|\) selects the \(L_0^{+}\) outcome, we have \(\mathcal{E}_{H,O;0}(p_0\otimes q_0) = L_0^{+} p_0 L_0^{+}\).

Let \(p_0 = |\psi\rangle\langle\psi|\) with \(|\psi\rangle = \alpha|0\rangle + \beta|1\rangle\) and \(|\alpha|^2+|\beta|^2=1\). A direct matrix calculation gives
\[
L_0^{+} p_0 L_0^{+}
  =
  \begin{pmatrix}
    \tfrac{3}{2}|\alpha|^2 & \tfrac{\sqrt{3}}{2}\,\alpha\overline{\beta} \\
    \tfrac{\sqrt{3}}{2}\,\overline{\alpha}\beta & \tfrac{1}{2}|\beta|^2
  \end{pmatrix}.
\]
With \(a = \tfrac12\begin{pmatrix}1&1\\[2pt]1&1\end{pmatrix}\), this yields
\[
F_0(p_0)
  = \tfrac{3}{8}|\alpha|^2 + \tfrac{1}{8}|\beta|^2
    + \tfrac{\sqrt{3}}{8}(\alpha\overline{\beta} + \overline{\alpha}\beta).
\]
On the classical subset \(D_H\), one finds \(F_0(|0\rangle\langle0|) = 3/8\) and \(F_0(|1\rangle\langle1|) = 1/8\), so
\[
\sup_{p_0\in D_H} F_0(p_0) = 3/8.
\]

Let
\[
x =
\begin{pmatrix}
\alpha \\
\beta
\end{pmatrix}
\in \mathbb{C}^2,
\quad \text{with } \|x\|^2 = |\alpha|^2 + |\beta|^2 = 1.
\]
Then \(F_0(p_0)\) can be written as a quadratic form
\[
F_0(p_0)=G(\alpha,\beta) = x^* A x,
\]
where
\[
A =
\frac{1}{8}
\begin{pmatrix}
3 & \sqrt{3} \\
\sqrt{3} & 1
\end{pmatrix}.
\]
The matrix \(A\) is Hermitian, hence it admits a spectral decomposition with real eigenvalues  \(1/2\) and \(0\). Therefore, for any unit vector \(x \in \mathbb{C}^2\),
\[
x^* A x = \lambda_1 |c_1|^2 + \lambda_2 |c_2|^2
\]
where \(\lambda_1,\lambda_2\) are the eigenvalues of \(A\), and
\[
|c_1|^2 + |c_2|^2 = 1.
\]
Thus \(x^* A x\) is a convex combination of the eigenvalues, and we obtain
\[
0
\le x^* A x
\le \tfrac{1}{2}.
\]
Consequently,
\[
\sup_{\|x\|=1} x^* A x = 1/2.
\]

This yields
\[
\sup_{|\alpha|^2 + |\beta|^2 = 1} G(\alpha,\beta) = \frac{1}{2}.
\]

For the coherent state
$$
|\psi^*\rangle
=
\frac{\sqrt{3}}{2}|0\rangle + \frac{1}{2}|1\rangle,
\qquad
p_0^* = |\psi^*\rangle\langle\psi^*|
$$
we have
\[
\sup_{p_0\in\mathcal{P}_1(\mathcal{H})} F_0(p_0)=F_0(p_0^*)
  = \tfrac{1}{2} > \sup_{p_0\in D_H} F_0(p_0) = 3/8.
\]

as claimed.

\end{proof}

\begin{theorem}[n-step quantum advantage]
\label{thm:coherent-optimal-m1}
There exist physically admissible parameters such that
\[
\sup_{p_0,p_1,\cdots,p_n\in\mathcal{P}_{1}(\mathcal{H})} F_1(p_0,p_1,\cdots, p_n)
\;>\;
\sup_{p_0,p_1, \cdots, p_n\in D_H} F_1(p_0,p_1,\cdots, p_n),
\] 
Moreover, any hidden effect \(p_0^*\) achieving the supremum on the left must be coherent, in the sense that it has nonzero off-diagonal entries in the computational basis.

\end{theorem}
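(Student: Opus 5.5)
We will take \(F_1(p_0,\dots,p_n)\) to be the full Viterbi functional \(\psi_n(p_0,\dots,p_n)=\varphi_{H,O}\bigl(\bigotimes_{m=0}^n(p_m\otimes q_m)\bigr)\) of the qubit memory model of this section. The plan is to reduce the statement to Theorem~\ref{thm:coherent-optimal-m0} by showing that, in this model, \(\psi_n\) factorizes into a product of one-step factors.

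\textbf{Step 1: factorization.} The transition expectation has the product form \(\mathcal{E}_{H;m}(X_1\otimes X_2)=\Phi_m(X_1)\,\mathrm{tr}(X_2)\). Hence each block map satisfies
\[
\mathcal{F}^{(m)}_{p_m,q_m}(Y)=\Phi_m\bigl(\mathcal{E}_{H,O;m}(p_m\otimes q_m)\bigr)\,\mathrm{tr}(Y).
\]
Unfolding the nested formula \eqref{eq:joint-expectation-recursive} from the inside out, every inner block contributes only a scalar. This gives
\[
\psi_n(p_0,\dots,p_n)=\varphi_{H,0}\bigl(\Phi_0(\mathcal{E}_{H,O;0}(p_0\otimes q_0))\bigr)\prod_{m=1}^n \mathrm{tr}\bigl(\Phi_m(\mathcal{E}_{H,O;m}(p_m\otimes q_m))\bigr).
\]
Both unitary conjugation and \(\Pi_Z\) preserve the normalized trace, so \(\mathrm{tr}\circ\Phi_m=\mathrm{tr}\). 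The \(m\)-th factor for \(m\ge1\) is therefore \(g_m(p_m):=\mathrm{tr}\bigl((L_m^{\pm})^\dagger p_m L_m^{\pm}\bigr)\). The first factor is exactly \(F_0(p_0)\) of \eqref{eq:one-step-functional} with \(V_1=\mathbb{I}\).

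\textbf{Step 2: choice of parameters.} At time \(0\) we use the data of Theorem~\ref{thm:coherent-optimal-m0}: \(\theta_0=0\), \(\phi_0=\pi/4\), \(\eta_0=1/2\), \(q_0=|+\rangle\langle+|\), and \(\varphi_{H,0}=\mathrm{Tr}(a\,\cdot)\) with \(a=|+\rangle\langle+|\). For \(m\ge1\) the parameters \(\theta_m,\phi_m\) are arbitrary, \(\eta_m\in(0,1)\), and \(q_m=|+\rangle\langle+|\).

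For \(p_m=|\psi\rangle\langle\psi|\) with \(|\psi\rangle=\alpha|0\rangle+\beta|1\rangle\), we get
\[
g_m(p_m)=\tfrac12\bigl((1+\eta_m)|\alpha|^2+(1-\eta_m)|\beta|^2\bigr).
\]
This factor has maximum \((1+\eta_m)/2>0\), attained at the diagonal effect \(|0\rangle\langle0|\). So over both \(\mathcal{P}_1(\mathcal{H})\) and \(D_H\) its supremum is the same number \(c_m\). Moreover \(g_m>0\) everywhere, with minimum \((1-\eta_m)/2\).

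\textbf{Step 3: comparison.} The factors are nonnegative and depend on disjoint variables. Hence the supremum of the product is the product of the suprema:
\[
\sup_{\mathcal{P}_1(\mathcal{H})^{n+1}}\psi_n=\tfrac12\prod_{m\ge1}c_m,\qquad \sup_{D_H^{n+1}}\psi_n=\tfrac38\prod_{m\ge1}c_m .
\]
Both values use Theorem~\ref{thm:coherent-optimal-m0}. Strict inequality follows because \(\prod_{m\ge1} c_m>0\).

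For the coherence claim, suppose \((p_0^*,\dots,p_n^*)\) attains the maximum. Since every \(g_m(p_m^*)>0\), the identity \(F_0(p_0^*)\prod_{m\ge1} g_m(p_m^*)=\tfrac12\prod_{m\ge1} c_m\) together with \(g_m\le c_m\) forces \(F_0(p_0^*)\ge 1/2\). Hence \(F_0(p_0^*)=1/2\). The matrix \(A\) of Theorem~\ref{thm:coherent-optimal-m0} has a simple top eigenvalue \(1/2\) with eigenvector \((\sqrt3/2,1/2)\). So \(p_0^*\) is the coherent projection found there, with nonzero off-diagonal entries.

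\textbf{Main obstacle.} The delicate point is Step 1: checking carefully that the nesting in \eqref{eq:joint-expectation-recursive} collapses as claimed, including that the terminal \(\mathbb{I}_{H;n+1}\) contributes \(\mathrm{tr}(\mathbb{I})=1\). One must also confirm that \(\Phi_m\) is trace-preserving, which holds for unital qubit maps of this unitary/dephasing form. Once the factorization is in place, the rest follows immediately from the single-step theorem.
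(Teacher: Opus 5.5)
Your proposal is correct and follows essentially the same route as the paper. The product form $\mathcal{E}_{H;m}(X_1\otimes X_2)=\Phi_m(X_1)\,\mathrm{tr}(X_2)$ makes $\psi_n$ factor as $F_0(p_0)\prod_{m\ge1}\mathrm{tr}\bigl(L^{+}p_mL^{+}\bigr)$, the later factors are maximized at the diagonal effect $|0\rangle\langle0|$, and the strict gap is inherited from the single-step theorem. You go slightly beyond the paper's proof in two ways: you derive the factorization explicitly (via $\mathrm{tr}\circ\Phi_m=\mathrm{tr}$, which also lets you leave the parameters at $m\ge1$ free), and you actually prove the ``moreover'' clause through the simple top eigenvalue of $A$, which the paper only illustrates by exhibiting one maximizer.
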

\begin{proof}
We use the same parameters as in the single-step construction:
\[
\theta_i=0,\quad U_i=e^{-i(\pi/4)\sigma_x},\quad
q_i=|+\rangle\langle+|,\quad
L^+=\sqrt{\tfrac{3}{2}}|0\rangle\langle0|+\sqrt{\tfrac{1}{2}}|1\rangle\langle1|.
\]
Let \(a=|+\rangle\langle+|\) and \(V_n=I_2\).

With these choices, the functional factorizes as
\[
F_n(p_0,\dots,p_n)
=\mathrm{Tr}(aL^+p_0L^+)\prod_{i=1}^n \mathrm{Tr}(L^+p_iL^+).
\]

\medskip

\noindent

On the classical set \(D_H\), one computes
\[
\sup_{p_0\in D_H} \mathrm{Tr}(aL^+p_0L^+)=\tfrac{3}{8},
\quad
\sup_{p_i\in D_H} \mathrm{Tr}(L^+p_iL^+)=\tfrac{3}{4}\;\;(i\ge1).
\]
Hence,
\[
\sup_{p_0,\dots,p_n\in D_H} F_n
=
\tfrac{3}{8}\left(\tfrac{3}{4}\right)^n.
\]
Now, on  \(\mathcal{P}_1(H)\), one computes
\[
\sup_{p_0\in \mathcal{P}_1(\mathcal{H})} \mathrm{Tr}(aL^+ p_0^{\mathrm{coh}}L^+)=\tfrac{1}{2},
\quad
\sup_{p_i\in \mathcal{P}_1(\mathcal{H})} \mathrm{Tr}(L^+p_iL^+)=\tfrac{3}{4}\;\;(i\ge1).
\]

So the optimal path is the sequence
\[
|\psi^*\rangle
=
\frac{\sqrt{3}}{2}|0\rangle + \frac{1}{2}|1\rangle,
\qquad
p_0^* = |\psi^*\rangle\langle\psi^*|; \ p_1=\cdots=p_n=|0\rangle\langle0|.
\]

Thus,
\[
\sup_{p_0,\dots,p_n\in\mathcal{P}_1(\mathcal{H})} F_n=F_n(p_0^*,\dots,p_n)
=
\tfrac{1}{2}
\left(\tfrac{3}{4}\right)^n.
\]

\medskip

Since
\[
\tfrac{1}{2} > \tfrac{3}{8},
\]
we obtain
\[
F_n(p_0^*,\dots,p_n)
>
\sup_{p_0,\dots,p_n\in D_H} F_n.
\]
Therefore,
\[
\sup_{p_0,\dots,p_n\in\mathcal{P}_1(\mathcal{H})} F_n
>
\sup_{p_0,\dots,p_n\in D_H} F_n.
\]

\medskip
\end{proof}

\begin{remark}[Structure of the optimal trajectory]
\label{rem:hybrid-trajectory}
The proof reveals that the quantum advantage does not require coherence at all time steps. 
Instead, it is sufficient that coherence appears at a single step—here at the initial time \(0\). 
More precisely, the optimal strategy exhibits a \emph{hybrid structure}: the first hidden effect 
\(p_0\) is genuinely quantum (non-diagonal), while the subsequent effects 
\(p_1,\dots,p_n\) can be chosen among classical extremal points.

This phenomenon is a direct consequence of the multiplicative structure of the functional \(F_n\). 
The strict gain originates entirely from the first factor, where interference terms contribute 
positively, while the remaining factors are maximized by classical states. 

In particular, the quantum advantage is \emph{localized in time} but \emph{globally preserved} 
through the recursion. This shows that even a minimal injection of coherence into the hidden 
trajectory is sufficient to outperform any fully classical strategy.
\end{remark}

\subsection{Information-Theoretic Interpretation}

The quantum advantage established in Theorem~\ref{thm:coherent-optimal-m0} admits a natural interpretation in terms of information storage and processing in stochastic models. In classical hidden Markov models, memory is encoded as a probability distribution over a finite set of hidden states, which are mutually orthogonal and therefore perfectly distinguishable. As a consequence, classical memory can only represent uncertainty over discrete alternatives.

In contrast, quantum hidden Markov models encode memory in density operators (or, in our formulation, effects) on a Hilbert space. These states need not be orthogonal and can exhibit quantum coherence, allowing the model to store information not only in populations (diagonal entries) but also in phase relations (off-diagonal entries). This additional structure enables a more efficient encoding of predictive information.

This perspective is consistent with the theory of quantum simulators of stochastic processes, where quantum implementations can achieve either the same predictive performance with strictly less memory, or strictly better performance at fixed memory dimension, by exploiting coherence in internal states~\cite{Banchi24}.

Within this framework, we introduce the \emph{score gap}
\[
\Delta
=
\max_{p\in\mathcal{P}_1(\mathcal{H})} F(p)
-
\max_{p\in D_H} F(p),
\]
which quantifies the additional decoding performance enabled by quantum memory beyond the classical subset
\(
D_H = \{ |0\rangle\langle0|, |1\rangle\langle1| \}.
\)
A strictly positive value of $\Delta$ therefore provides an operational measure of quantum advantage.

In the single-step setting of Theorem~\ref{thm:coherent-optimal-m0}, the functional $F_0$ depends explicitly on the off-diagonal entries of the hidden effect. For a pure state
\[
p_0 = |\psi\rangle\langle\psi|, 
\qquad 
|\psi\rangle = \alpha|0\rangle + \beta|1\rangle,
\]
with $|\alpha|^2 + |\beta|^2 = 1$, one obtains
\[
F_0(p_0) - \sup_{p_0 \in D_H} F_0
=
\frac{\sqrt{3}}{4}\,\mathrm{Re}(\alpha\overline{\beta})
-
\frac{1}{4}|\beta|^2.
\]

This expression reveals two competing contributions. The first term,
\[
\mathrm{Re}(\alpha\overline{\beta}),
\]
originates from the off-diagonal entries of $p_0$ and captures quantum interference effects that are absent in classical models. The second term penalizes population imbalance and shows that not all quantum states yield an advantage.

In particular, the gain is directly controlled by the coherence of the state. Indeed, the $l_1$-coherence is given by
\[
C_{l_1}(p_0) = 2|\alpha\beta|,
\]
which quantifies the magnitude of the off-diagonal elements. However, the advantage is not determined solely by the amount of coherence, but also by its phase alignment, as only the real part $\mathrm{Re}(\alpha\overline{\beta})$ contributes positively. This shows that coherence acts as a resource only when it is properly aligned with the measurement structure.

From an information-theoretic viewpoint, this demonstrates that quantum models can encode predictively relevant information in phase relations between hidden states, rather than solely in classical probabilities. The resulting improvement does not arise from enlarging the hidden state space, but from exploiting non-commuting, coherent memory states within the same Hilbert space dimension.

Therefore, the score gap $\Delta$ provides an operational quantification of quantum coherence as a resource for inference: it isolates the contribution of off-diagonal degrees of freedom in enhancing decoding performance, thereby establishing a direct link between coherence and memory efficiency in quantum stochastic models~\cite{Elliot21}.

\begin{definition}[\(l_1\)-coherence]
Let \(\{|i\rangle\}\) be the computational basis of \(\mathcal{H}\). 
For any density operator \(\rho\) on \(\mathcal{H}\), the \(l_1\)-coherence of \(\rho\) is defined by
\[
C_{l_1}(\rho)
=
\sum_{i\neq j} \left| \langle i|\rho|j\rangle \right|.
\]
\end{definition}

\begin{proposition}[Coherence-dependent gain]
For the one-step functional \(F_0\) of Theorem~\ref{thm:coherent-optimal-m0}, the advantage
\[
\Delta_0
=
\max_{p_0\in\mathcal{P}_1(\mathcal{H})} F_0(p_0)
-
\max_{p_0\in D_H} F_0(p_0)
\]
satisfies
\begin{equation}\label{eq:gap-coherence-bound}
 \Delta_0
\leq
\frac{\sqrt{3}}{8}\,C_{l_1}(p_0^*)
-
\frac{1}{4}|\beta|^2   
\end{equation}

where $C_{l_1}(p_0^*) = 2|\alpha\beta|$ denotes the $l_1$-coherence. Moreover, $\Delta_0 > 0$ only if $C_{l_1}(p_0^*) > 0$, so coherence is a necessary resource for achieving a quantum advantage.
\end{proposition}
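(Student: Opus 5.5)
The plan is to reduce everything to the explicit quadratic form already computed in the proof of Theorem~\ref{thm:coherent-optimal-m0}. There, with \(p_0=|\psi\rangle\langle\psi|\) and \(|\psi\rangle=\alpha|0\rangle+\beta|1\rangle\), we have
\[
F_0(p_0)=\tfrac38|\alpha|^2+\tfrac18|\beta|^2+\tfrac{\sqrt3}{4}\,\mathrm{Re}(\alpha\overline{\beta}).
\]
Using \(|\alpha|^2=1-|\beta|^2\), I will rewrite this as
\[
F_0(p_0)=\tfrac38-\tfrac14|\beta|^2+\tfrac{\sqrt3}{4}\,\mathrm{Re}(\alpha\overline{\beta}).
\]
Since the same proof gives \(\max_{D_H}F_0=3/8\), this yields the identity
\[
F_0(p_0)-\max_{p\in D_H}F_0(p)=\tfrac{\sqrt3}{4}\mathrm{Re}(\alpha\overline\beta)-\tfrac14|\beta|^2,
\]
valid for every pure effect \(p_0\). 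This is the expression displayed in the information-theoretic discussion.

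Next I will specialize this identity to a maximiser \(p_0^*\). Theorem~\ref{thm:optimal-sequence-existence} guarantees that one exists, and the proof of Theorem~\ref{thm:coherent-optimal-m0} exhibits it explicitly. Writing \(\alpha,\beta\) for the coefficients of \(p_0^*\), we get \(\Delta_0=\tfrac{\sqrt3}{4}\mathrm{Re}(\alpha\overline\beta)-\tfrac14|\beta|^2\) exactly. The bound \eqref{eq:gap-coherence-bound} then follows from \(\mathrm{Re}(\alpha\overline\beta)\le|\alpha\overline\beta|=|\alpha\beta|\) together with \(C_{l_1}(p_0^*)=2|\alpha\beta|\). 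The latter holds because the off-diagonal entries of \(p_0^*\) are \(\alpha\overline\beta\) and \(\overline\alpha\beta\), so the definition of \(l_1\)-coherence gives \(2|\alpha\beta|\). As a sanity check, for \(|\psi^*\rangle=\tfrac{\sqrt3}{2}|0\rangle+\tfrac12|1\rangle\) the phase is real and positive, so equality holds and \(\Delta_0=\tfrac{3}{16}-\tfrac1{16}=\tfrac18=\tfrac12-\tfrac38\), consistent with Theorem~\ref{thm:coherent-optimal-m0}.

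For the necessity statement I will argue by contraposition. If \(C_{l_1}(p_0^*)=0\), then \(\alpha\beta=0\), and the right-hand side of \eqref{eq:gap-coherence-bound} equals \(-\tfrac14|\beta|^2\le0\), so \(\Delta_0\le0\). Equivalently, \(\Delta_0>0\) forces \(C_{l_1}(p_0^*)>0\).

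The only delicate point is interpretive rather than computational. The right-hand side depends on the particular maximiser, since \(\alpha,\beta\) are its coefficients, so I must fix \(p_0^*\) to be a maximiser before invoking the pointwise identity. Phase freedom in \(|\psi\rangle\) causes no issue, because \(|\alpha\beta|\), \(|\beta|^2\) and \(\mathrm{Re}(\alpha\overline\beta)\) are all invariant under a global phase.
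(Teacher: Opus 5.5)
Your proposal is correct and follows essentially the paper's proof: both rewrite \(F_0(p_0)-\tfrac38\) as \(\tfrac{\sqrt3}{4}\mathrm{Re}(\alpha\overline\beta)-\tfrac14|\beta|^2\), use \(\mathrm{Re}(\alpha\overline\beta)\le|\alpha\beta|=\tfrac12 C_{l_1}\) at the maximiser, and get necessity from the diagonal case. The paper obtains necessity directly from \(p_0^*\in D_H\), while you obtain it from the bound itself; the two arguments are equivalent, and your explicit step of fixing \(p_0^*\) before specializing the pointwise identity spells out what the paper leaves implicit.
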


\begin{proof}
From the explicit computation of the one-step functional, we have
\[
F_0(p_0)
=
\frac{3}{8}|\alpha|^2
+
\frac{1}{8}|\beta|^2
+
\frac{\sqrt{3}}{8}(\alpha\overline{\beta} + \overline{\alpha}\beta).
\]
On the classical subset $D_H = \{|0\rangle\langle0|, |1\rangle\langle1|\}$, one computes
\[
\sup_{p_0 \in D_H} F_0(p_0) = \frac{3}{8}.
\]
Subtracting this value yields
\[
F_0(p_0) - \frac{3}{8}
=
\frac{\sqrt{3}}{4}\,\mathrm{Re}(\alpha\overline{\beta})
-
\frac{1}{4}|\beta|^2,
\]
Next, using the inequality
\[
\mathrm{Re}(\alpha\overline{\beta}) \le |\alpha\beta|,
\]
and the definition of the $l_1$-coherence
\(
C_{l_1}(p_0) = 2|\alpha\beta|,
\)
we obtain
\[
\mathrm{Re}(\alpha\overline{\beta})
\le
\frac{1}{2}C_{l_1}(p_0).
\]
So, one obtains
\[
\Delta
\le
\frac{\sqrt{3}}{8}\,C_{l_1}(p_0^*)
-
\frac{1}{4}|\beta|^2,
\]
which proves \eqref{eq:gap-coherence-bound}.

Finally, if $C_{l_1}(p_0) = 0$, then $\alpha\beta = 0$, so $p_0$ is diagonal and belongs to $D_H$. Hence $\Delta(p_0) \le 0$, showing that coherence is necessary for a strictly positive score gap.
\end{proof}

\begin{example}[Optimal coherent state and $n$-step amplification]
\label{ex:coherence-n-step}
Consider the coherent state
\[
|\psi^*\rangle
=
\frac{\sqrt{3}}{2}|0\rangle + \frac{1}{2}|1\rangle,
\qquad
p_0^* = |\psi^*\rangle\langle\psi^*|.
\]
A direct computation shows that
\[
F_0(p_0^*) = \frac{1}{2},
\qquad
\sup_{p_0 \in D_H} F_0(p_0) = \frac{3}{8},
\]
so that
\[
\Delta(p_0^*) = \frac{1}{8} > 0.
\]
In this case, the coherence is
\[
C_{l_1}(p_0^*) = 2 \left|\frac{\sqrt{3}}{2}\cdot \frac{1}{2}\right| = \frac{\sqrt{3}}{2},
\]
and the phase is optimally aligned so that
\(
\mathrm{Re}(\alpha\overline{\beta}) = |\alpha\beta|.
\)

\medskip

In the $n$-step setting of Theorem~\ref{thm:coherent-optimal-m1}, the functional factorizes as
\[
F_n(p_0,\dots,p_n)
=
\mathrm{Tr}(aL^+p_0L^+)\prod_{i=1}^n \mathrm{Tr}(L^+p_iL^+).
\]
Choosing
\[
p_0 = p_0^*,
\qquad
p_1 = \cdots = p_n = |0\rangle\langle0|,
\]
one obtains
\[
\sup_{\mathcal{P}_1(\mathcal{H})} F_n
=
\frac{1}{2}\left(\frac{3}{4}\right)^n,
\qquad
\sup_{D_H} F_n
=
\frac{3}{8}\left(\frac{3}{4}\right)^n.
\]
Therefore, the $n$-step score gap is
\[
\Delta_n
=
\left(\frac{1}{2} - \frac{3}{8}\right)\left(\frac{3}{4}\right)^n
=
\frac{1}{8}\left(\frac{3}{4}\right)^n > 0.
\]

\medskip

This example shows that the coherence-induced advantage at the initial step propagates multiplicatively along the optimal quantum path, demonstrating how quantum memory enhances decoding performance over multiple steps.
\end{example}

In the multi-step setting, Theorem~\ref{thm:optimal-sequence-existence} shows that the optimal Viterbi trajectory for a suitable finite horizon \(n\) must contain at least one coherent hidden state \(p_m^* \notin D_H\), and that the corresponding optimal quantum score
\[
\psi_{\mathrm{quantum}}^*
=
\varphi_{H,O}\!\left(\bigotimes_{m=0}^n (p_m^*\otimes q_m)\right)
\]
strictly exceeds the best possible classical score obtained by restricting all hidden states to \(D_H\). This demonstrates that, for the processes and observation sequences considered, no classical HMM—regardless of its parametrization—can reproduce the decoding performance of the quantum Viterbi algorithm. The situation is directly analogous to quantum stochastic simulators that require strictly fewer internal degrees of freedom than any classical model to reproduce the same process, or that achieve better approximation quality at a fixed memory dimension \cite{Banchi24,YangElliot25}.

The origin of this superiority is the same phenomenon identified as “hidden quantum memory” and “purely quantum memory” in recent work on temporal quantum correlations \cite{TaElM,TaBra25}. There, one finds processes whose observed statistics can be reproduced by classical models only if the internal memory is allowed to grow beyond what is needed in a quantum realization, or only if one discards certain temporal features. In our setting, the strict gap between quantum and classical Viterbi scores witnesses the presence of such quantum memory at the level of a decoding task: the process can be represented and decoded within a two-dimensional non-commutative effect space, but any classical description confined to diagonal states necessarily loses part of the information relevant for maximizing the trajectory score. This connects our decoding advantage to broader operational characterizations of quantum memory based on channel discrimination and constrained separability, where non-classical temporal correlations can be certified without enlarging the observed alphabet \cite{OZNPQ26}.

\section{Discussion}\label{Sect_disc}
The results developed in this work place the quantum Viterbi algorithm on a rigorous operator-algebraic footing and show that its optimization landscape can display a strict, structurally unavoidable advantage over any classical Viterbi decoder of the same nominal hidden dimension. At the core of this strict separation lies the optimization domain: the quantum recursion runs over the non-commutative effect set \(\mathsf{Eff}(\mathcal{H})\) and, more specifically, over the pure-effect manifold \(\mathcal{P}_1(\mathcal{H})\), whereas a classical HMM restricts to a finite set \(D_H\) of commuting rank-one projections. Theorem~\ref{thm:coherent-optimal-m0} shows that, for suitable HQMM data, there exist horizons \(n\) and observation sequences \((q_0,\dots,q_n)\) for which any trajectory maximizing the quantum Viterbi functional must include at least one coherent effect \(p_m^*\notin D_H\), and that the corresponding score strictly exceeds the supremum attainable when all hidden effects are constrained to lie in \(D_H\). This gap persists even though both descriptions share the same two-dimensional Hilbert space \(\mathcal{H}=\mathbb{C}^2\), highlighting that the relevant resource is not Hilbert-space dimension per se but the exploitation of non-commuting elements of \(\mathsf{Eff}(\mathcal{H})\).

From a quantum Markovian viewpoint, the construction fits naturally into the framework of operator-algebraic quantum Markov chains \cite{accardi1974} and hidden quantum Markov processes  \cite{AGLS24Q}, where transitions are encoded by completely positive, identity-preserving expectations \(\mathcal{E}_{H;m}\) on tensor products of local algebras and observations arise from quantum instruments \(\mathcal{E}_{H,O;m}\) acting on \(\mathcal{B}_{H;m}\otimes\mathcal{B}_{O;m}\). Our two-level memory model provides a minimal, yet nontrivial, instance of this theory: the hidden evolution \(\Phi_m\) interpolates between unitary rotations and dephasing, and the emission channel realizes a weak, binary measurement that partially distinguishes computational-basis effects while preserving off-diagonal coherence. In this setting, the strict inequality
\[
\sup_{(p_0,\dots,p_n)\in\mathcal{P}_1(\mathcal{H})^{n+1}}
\varphi_{H,O}\Bigl(\bigotimes_{m=0}^n (p_m\otimes q_m)\Bigr)
\;>\;
\sup_{(p_0,\dots,p_n)\in D_H^{n+1}}
\varphi_{H,O}\Bigl(\bigotimes_{m=0}^n (p_m\otimes q_m)\Bigr)
\]
should be interpreted as an operational witness that the underlying process cannot be “compressed” into a commutative hidden model of the same dimension without degrading decoding performance. In particular, it exhibits in a concrete decoding problem the type of quantum–classical representational gap that, on the level of stochastic simulation, underpins quantum memory compression and dimension reduction \cite{Elliot21,YangElliot25}.

In information-theoretic terms, the one-step functional \(F_0\) admits a closed expression that depends linearly on the off-diagonal entry of the hidden effect in the computational basis. For pure states \(p_0 = |\psi\rangle\langle\psi|\) with \(|\psi\rangle = \alpha|0\rangle + \beta|1\rangle\), the advantage \(\Delta_0\) over classical effects satisfies
\[
\Delta_0
=
\frac{\sqrt{3}-1}{4}\,C_{l_1}(p_0^*),
\]
where \(C_{l_1}(p_0^*) = \sum_{i\neq j}|\langle i|p_0^*|j\rangle|\) and \(p_0^*\) is an optimal state. Thus, in this example, the strict quantum advantage is directly proportional to a standard coherence monotone. This makes the role of coherence explicit: the improvement in Viterbi score stems precisely from the availability of hidden states with \(\langle 0|p_0^*|1\rangle\neq 0\), and vanishes if the process is constrained to a diagonal, classical sub-algebra. The structure mirrors results on quantum implementations of nondeterministic hidden Markov models, where non-orthogonal memory states enable a reduction of the statistical complexity relative to any classical simulator \cite{Elliot21}. It also parallels recent analyses of accuracy–memory trade-offs in quantum simulations of stochastic processes, where performance gains at fixed memory dimension are traced back to the ability to store predictive information in coherent, non commuting internal states rather than in classical mixtures \cite{Banchi24}.

The memory interpretation of this advantage connects our work to operational notions of “hidden quantum memory”  in time-correlated processes. In those settings, one considers families of temporal correlations or effective channels that admit a quantum realization with bounded internal dimension, but cannot be reproduced by any classical model of the same size without either enlarging the memory or discarding certain temporal features \cite{TaElM,TaBra25}. In our framework, the strict inequality between quantum and classical Viterbi scores plays an analogous role: it shows that, for the same observed process and fixed Hilbert-space dimension, restricting the hidden dynamics to classical effects eliminates part of the information relevant to the decoding task. One may view the fully quantum Viterbi decoder as saturating the representational capacity of a two-dimensional non commutative memory, while any classical restriction effectively underutilized that memory. From this perspective, the qubit HQMM studied here realises a simple instance of the “quantum dimensionality reduction” phenomena observed in more general sampling and simulation tasks, where a quantum device reproduces a target process using a hidden state space of smaller effective classical rank \cite{YangElliot25}.

The structural similarity between the quantum Viterbi recursion and classical dynamic programming suggests several directions for further work in quantum control and filtering. On the one hand, the recursion can be interpreted as a maximization over hidden trajectories in \(\mathcal{P}_1(\mathcal{H})\) of a value functional that factors through quantum transition expectations and instruments. This is close in spirit to quantum filtering and quantum trajectory theory, where one seeks to estimate conditional states of a system undergoing continuous or discrete-time measurements, and where non classical memory manifests in the dependence of future conditional states on past measurement records beyond any classical sufficient statistic \cite{ticozzi2008,HY2018}. A natural extension is to formulate quantum analogues of risk-sensitive or entropy-regularized Viterbi criteria, and to relate them to quantum versions of optimal filtering and smoothing. This would link our discrete-time, finite-horizon recursion to continuous-time stochastic master equations and to the broader control-theoretic framework of invariant and attractive quantum Markovian subsystems \cite{ticozzi2008}.

On the algorithmic side, the present work dovetails with efforts in quantum-enhanced machine learning and sequence modeling. HQMMs and related architectures have already been proposed as quantum generalizations of HMMs and probabilistic automata, with learning algorithms based on transition-operation matrices and tensor-network formalisms \cite{Monras11,Srin2017}. The strict decoding advantage exhibited here suggests that, even at fixed Hilbert-space dimension, quantum models may attain likelihoods or loss values on sequential data that are unattainable by any classical HMM of the same hidden cardinality. This raises the prospect of quantum Viterbi-style decoders serving as subroutines in quantum sequence models for tasks such as time-series prediction, anomaly detection, or structured sequence labeling, potentially complementing existing proposals for quantum-enhanced classification and generative modeling \cite{SSP15,J22}. An important open problem is to characterize, for realistic quantum hardware constraints, how the continuous optimization over \(\mathcal{P}_1(\mathcal{H})\) can be approximated efficiently—e.g.\ by variational families or discretizations—and how the resulting quantum models compare empirically to best-in-class classical baselines on real-world sequential data.

Finally, the dynamic-programming structure of the quantum Viterbi recursion suggests possible connections to recent work on quantum optimization and decoding algorithms that rely on interference-based exploration of path spaces. In particular, our recursion can be viewed as an abstract optimization over paths in a non-commutative state space, while proposals based on quantum walks and interferometric decoders achieve speedups or improved approximations by exploiting coherent superpositions of classical trajectories \cite{Ambainis2003,JSW2025}. Bridging these viewpoints—for instance, by embedding the Viterbi value recursion into a quantum circuit that approximates the "argmax" over \(\mathcal{P}_1(\mathcal{H})^{n+1}\) via amplitude amplification or adiabatic paths—could lead to hybrid classical–quantum decoding schemes that are both structurally interpretable and practically implementable \cite{CFG26}. In parallel, the link to quantum memory witnesses and channel-discrimination-based characterizations of memory points toward using Viterbi-type functionals as new, task-oriented diagnostics for hidden quantum memory, complementing existing channel- and process-tensor-based criteria \cite{OZNPQ26,TaElM}.

\end{document}